\newcommand{\clash}{{\textsc{Clash}}\xspace}
\newcommand{\bigO}{\mathcal O}
\newcommand{\abs}[1]{\left\vert#1\right\vert}
\newcommand{\vectornorm}[1]{\|#1\|}
\newcommand{\vectornormbig}[1]{\Big\|#1\Big\|}
\newcommand{\vectornormmed}[1]{\big\|#1\big\|}
\newcommand{\obs}{y}
\newcommand{\sensing}{\Phi}
\newcommand{\signal}{x}
\newcommand{\bestsignal}{x^\ast}
\newcommand{\noise}{\varepsilon}
\newcommand{\dimension}{n}
\newcommand{\numsam}{m}
\newcommand{\sparsity}{k}
\newcommand{\constraint}{\mathcal{C}_\sparsity}
\newcommand{\constrainttwo}{\mathcal{C}_{2\sparsity}}
\DeclareMathOperator*{\argmax}{arg\,max}
\DeclareMathOperator*{\argmin}{arg\,min}
\newcommand{\R}{ \mathbb{R} } 
\newtheorem{lemma}{Lemma}
\newtheorem{prop}{Proposition}
\newtheorem{definition}{Definition}
\newtheorem{remark}{Remark}
\newtheorem{theorem}{Theorem}
\newtheorem{corollary}{Corollary} 
\begin{document}

\title{Combinatorial Selection and Least Absolute Shrinkage via the \clash Algorithm}

\IEEEoverridecommandlockouts

\author{Anastasios Kyrillidis and Volkan Cevher\thanks{This work was supported in part by the European Commission under Grant MIRG-268398, ERC Future Proof, and DARPA KeCoM program
\#11-DARPA-1055. VC also would like to acknowledge Rice University for his Faculty Fellowship.}\\
Laboratory for Information and Inference Systems,
\'{E}cole Polytechnique F\'{e}d\'{e}rale de Lausanne}

\maketitle
\begin{abstract}
The least absolute shrinkage and selection operator (LASSO) for linear regression exploits the geometric interplay of the $\ell_2$-data error objective and the $\ell_1$-norm constraint to arbitrarily select sparse models. Guiding this uninformed selection process with sparsity models has been precisely the center of attention over the last decade in order to improve learning performance. To this end, we alter the selection process of LASSO to explicitly leverage combinatorial sparsity models (CSMs) via the combinatorial selection and least absolute shrinkage (\clash) operator. We provide concrete guidelines how to leverage combinatorial constraints within \clash, and characterize \clash's guarantees as a function of the set restricted isometry constants of the sensing matrix. Finally, our experimental results show that \clash can outperform both LASSO and model-based compressive sensing in sparse estimation.
\end{abstract} 

\IEEEpeerreviewmaketitle

\section{Introduction}\label{sec: introduction}
The {\em least absolute shrinkage and selection operator} (LASSO) is the {\it de facto} standard algorithm for regression \cite{tibshirani96regression}. LASSO estimates sparse linear models by minimizing the empirical data error via: 
\begin{equation}\label{eq: LASSO}
\widehat{\signal}_{\text{LASSO}} = \arg\min \left\{\left\|\obs - \sensing \signal \right\|_2^2:~~\|\signal\|_1\le \lambda \right \}, 
\end{equation}
where $\|\cdot\|_r$ is the $\ell_r$-norm. In \eqref{eq: LASSO}, $\sensing\in \R^{\numsam \times \dimension}$ is the sensing matrix, $\obs \in \R^{\numsam}$ are the responses (or observations), $\signal \in \R^\dimension$ is the loading vector and $\lambda \in \R^{++}$ governs the sparsity of the solution. Along with many efficient algorithms for its solution, the LASSO formulation is now backed with a rather mature theory for the generalization of its solutions as well as its variable selection consistency \cite{duchi2008efficient,bickel2009simultaneous,wainwright2009sharp,zhao2006model}.

While the long name attributed to \eqref{eq: LASSO} is apropos,\footnote{Many of the optimization solutions to LASSO leverage {\em shrinkage} operations  (e.g., as projections onto the $\ell_1$-ball) for {sparse} model {\em selections}. However, the geometric interplay of the $\ell_2$-data error objective and the $\ell_1$-norm constraint inherently promotes sparsity, independent of the algorithm.} it does not capture the LASSO's arbitrariness in subset selection via {shrinkage} to best explain the responses. In fact, this {\em uninformed} {selection} process not only prevents interpretability of results in many problems, but also fails to exploit key prior information that could radically improve learning performance. Based on this premise, approaches to guide the selection process of the LASSO are now aplenty.

Surprisingly, while the prior information in many regression problems generate fundamentally discrete constraints (e.g., on the sparsity patterns or the {\em support} of the LASSO solution), the majority of the existing approaches that enforce such constraints in selection are inherently continuous. For instance, a prevalent approach is to tailor a {\em sparsity inducing} norm to the constraints on the support set (c.f., \cite{jenatton2009structured}). That is, we create a structured convex norm by {\em mixing} basic norms with weights over pre-defined groups or using the Lov\'asz extension of non-decreasing submodular set functions of the support. As many basic norms have well-understood behavior in sparse selection, reverse engineering such norms is quite intuitive.

While such structure inducing, convex norm-based approaches on the LASSO are impressive, our contention in this paper is that, in order to truly make an impact in structured sparsity problems, we must fully leverage explicitly combinatorial approaches to guide LASSO's subset selection process. To achieve this, we show how Euclidean projections with structured sparsity constraints correspond to an integer linear program (ILP), which can be exactly or approximately solved subject to  matroid (via the greedy algorithm), and certain linear inequality constraints (via convex relaxation or multi-knapsack solvers).
A key actor in this process is a polynomial-time combinatorial algorithm that goes beyond simple selection heuristics towards provable solution quality as well as runtime/space bounds.

Furthermore, we introduce our combinatorial selection and least absolute shrinkage (\clash) operator and theoretically characterize its estimation guarantees. \clash enhances the {\em model-based compressive sensing} (model-CS) framework \cite{modelCS} by combining $\ell_1$-norm and combinatorial constraints on the regression vector. Therefore, \clash uses a combination of shrinkage and hard thresholding operations to significantly outperform the model-CS approach, LASSO, or continuous structured sparsity approaches in learning performance of sparse linear models. Furthermore, \clash establishes a regression framework where the underlying tractability of approximation in combinatorial selection is directly reflected in the algorithm's estimation and convergence guarantees.

The organization of the paper is as follows. In Sections \ref{sec: prelim} and \ref{sec: prox}, we set up the notation and the exact projections with structured sparsity constraints. We develop \clash in Section \ref{sec: algo} and highlight the key components of its convergence proof in Section \ref{sec:ing}. We present numerical results in Section \ref{sec: experiments}. We provide our conclusions in Section \ref{sec: conc}.

\section{Preliminaries}\label{sec: prelim}
\textbf{Notation:}
We use $[\signal]_j $ to denote the $j$-th element of $\signal$, and let $ \signal_i $ represent the $i$-th iterate of \clash.  The index set of $ \dimension $ dimensions is denoted as $ \mathcal{N} = \lbrace 1, 2, \dots, \dimension \rbrace $. Given $ \mathcal{S} \subseteq \mathcal{N} $, we define the complement set $ \mathcal{S}^c = \mathcal{N}\setminus \mathcal{S} $. Moreover, given a set $ \mathcal{S} \subseteq  \mathcal{N}$ and a vector $ \signal \in \mathbb{R}^\dimension $, $ (\signal)_{\mathcal{S}} \in \mathbb{R}^\dimension $ denotes a vector with the following properties: $ \left[(\signal)_{\mathcal{S}}\right]_{\mathcal{S}}=\left[\signal\right]_{\mathcal{S}}$ and $\left[(\signal)_{\mathcal{S}}\right]_{\mathcal{S}^c}= 0 $. The support set of $ \signal $ is defined as ${\rm supp}(\signal) = \lbrace i: [\signal]_i \neq 0 \rbrace $. We use $\abs{\mathcal{S}}$ to denote the cardinality of the set $\mathcal{S}$. The empirical data error is denoted as $f(\signal) \triangleq \vectornorm{\obs - \sensing \signal}_2^2 $, with gradient defined as $ \nabla f(\signal)  \triangleq -2 \sensing^T(\obs - \sensing \signal)$,  where $^T $ is the transpose operation. The notation $ \nabla_{\mathcal{S}}f(\signal)$ is shorthand for $\left(\nabla f(\signal)\right)_{\mathcal{S}}$. $ \mathbb{I} $ represents the identity matrix. 

\textbf{Combinatorial notions of sparsity:} We provide some definitions on combinatorial sparse models, and elaborate on a subset of interesting models with algorithmic implications.
\begin{definition}[Combinatorial sparsity models (CSMs)] \label{def: CSMs}
  We define a combinatorial sparsity model $\constraint  = \lbrace \mathcal{S}_m:$ $ \forall m,~{\mathcal{S}}_m \subseteq  \mathcal{N},~ \abs{{\mathcal{S}}_m} \le \sparsity \rbrace $ with the sparsity parameter $\sparsity$ as a collection of distinct index subsets ${\mathcal{S}}_m$.
\end{definition}

Throughout the paper, we assume that any CSM $\constraint$ is downward compatible, i.e., removing any subset of indices of any given element in $\constraint$, it is still in $\constraint$.

\textbf{Properties of the regression matrix:} Deriving approximation guarantees for \clash behooves us to assume the restricted isometry property (RIP) (defined below) on the regression matrix $\sensing$ \cite{Candes04A}. While the RIP and other similar conditions for deriving consistency properties of LASSO and its variants, such as the unique/exact representation property or the irrepresentable condition \cite{tropp2006algorithms,zhao2006model,jenatton2009structured,jacob2009group,bach2010structured}, are unverifiable {\em a priori} without exhaustive search, many {\em random} matrices satisfy them with high probability. 

\begin{definition}[RIP \cite{Candes04A,modelCS}] The regression matrix has the $\sparsity$-RIP with an isometry constant $\delta_{\sparsity}$ when 
\begin{equation}
(1-\delta_{\sparsity}){\vectornorm{\signal}_2^2} \leq {\vectornorm{\sensing \signal}_2^2} \leq (1+\delta_{\sparsity}){\vectornorm{\signal}_2^2} \label{eq:RIP1}, 
\end{equation} $ \forall \emph{\text{supp}}(\signal) \in \mathcal{C}_{\sparsity} $, where $\delta_{\sparsity} = \max_{\mathcal{S} \in \mathcal{C}_{\sparsity}} \vectornormmed{\sensing_{\mathcal{S}}^T \sensing_{\mathcal{S}} - \mathbb{I}}_{2\rightarrow 2}$, and $\sensing_{\mathcal{S}}$ is a submatrix of $\sensing$ as column-indexed by ${\mathcal{S}}$.
\end{definition} 
Here, we also comment on the scaling of $(\sparsity,\numsam,\dimension)$ for the desired level of isometry. When the entries of $\sensing$ can be modeled as independent and identically distributed (iid) with respect to a sub-Gaussian distribution, we can show that $\numsam =\bigO{\left(\delta_{\sparsity}^{-2}(\log(2M) + \sparsity \log(12\delta_{\sparsity}^{-1}))\right)}$ with overwhelming probability \cite{modelCS}. Here, $M$ is the minimum number of subspaces covering ${\mathcal{C}_{\sparsity}}$. While $\numsam$ explicitly depends on $\dimension$, for certain restricted CSMs, such as the rooted connected tree of \cite{modelCS}, this dependence can be quite weak, e.g.,  $\numsam = \bigO{(\sparsity)}$.

\section{Exact and approximate projections onto CSMs}\label{sec: prox}
The workhorse of the model-CS approach is the following non-convex projection problem onto CSMs, as defined by $\constraint$, which is a basic subset selection problem: 
\begin{equation}
\mathcal{P}_{\constraint}(\signal) = \argmin_{w \in \mathbb{R}^{\dimension}}\left\{\vectornorm{w - \signal}_2^2: {\rm  supp}(w) \in \constraint \right\}, \label{eq:proj} 
\end{equation}
where $\mathcal{P}_{\constraint}(\signal)$ is the projection operator. \cite{modelCS} shows that as long as $\mathcal{P}_{\constraint}(\cdot) $ is exactly computed in polynomial time for a CSM, their sparse recovery algorithms inherit strong approximation guarantees for that CSM. To better identify the CSMs that live within the model-CS assumptions, we first state the following key observation---the proof can be found in \cite{KyrillidisCevherSketching}.
\begin{lemma}[Euclidean projections onto CSMs]{\label{lemma:mod}}
  The support of the Euclidean projection onto $\constraint$ in \eqref{eq:proj} can be obtained as a solution to the following discrete optimization problem: 
\begin{equation}\label{eq: approximate proj}
{\rm supp}\left(\mathcal{P}_{\constraint}(\signal) \right) = \argmax_{\mathcal{S}: \mathcal{S} \in \constraint}F(\mathcal{S};\signal), 
\end{equation}
where $ F(\mathcal{S};\signal)= \|\signal\|_2^2 - \|(\signal)_\mathcal{S}-\signal\|_2^2 = \sum_{i\in \mathcal{S}} \abs{[\signal]_i}^2$ is the modular, variance reduction set function. Moreover, let $\widehat{\mathcal{S}}\in \constraint$ be the minimizer of the discrete problem. Then, it holds that $\mathcal{P}_{\constraint}(\signal)= (\signal)_{\widehat{\mathcal{S}}}$, which corresponds to hard thresholding.
\end{lemma}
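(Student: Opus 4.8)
The plan is to exploit the fact that the projection objective $\|w - \signal\|_2^2$ separates across coordinates, so that the only genuine coupling between coordinates in \eqref{eq:proj} enters through the combinatorial support constraint ${\rm supp}(w) \in \constraint$. I would therefore split the minimization into an outer search over a candidate support $\mathcal{S} \in \constraint$ and an inner minimization over all $w$ with ${\rm supp}(w) \subseteq \mathcal{S}$.

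First I would fix an arbitrary $\mathcal{S} \in \constraint$ and solve the inner problem. Writing the objective as $\sum_{i \in \mathcal{S}}([w]_i - [\signal]_i)^2 + \sum_{i \in \mathcal{S}^c}[\signal]_i^2$, the second sum is independent of $w$ and the first is minimized (to zero) by the pointwise choice $[w]_i = [\signal]_i$ for $i \in \mathcal{S}$. Hence the inner optimizer is $w = (\signal)_\mathcal{S}$ with optimal value $\sum_{i \in \mathcal{S}^c}[\signal]_i^2 = \|\signal\|_2^2 - \sum_{i \in \mathcal{S}}[\signal]_i^2 = \|\signal\|_2^2 - F(\mathcal{S};\signal)$. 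Along the way this also verifies the claimed identity $F(\mathcal{S};\signal) = \|\signal\|_2^2 - \|(\signal)_\mathcal{S} - \signal\|_2^2$, since $(\signal)_\mathcal{S} - \signal$ agrees with $-\signal$ on $\mathcal{S}^c$ and vanishes on $\mathcal{S}$.

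Substituting the inner value back, the outer problem becomes $\min_{\mathcal{S} \in \constraint}\left(\|\signal\|_2^2 - F(\mathcal{S};\signal)\right)$, and since $\|\signal\|_2^2$ is constant this is equivalent to $\max_{\mathcal{S} \in \constraint} F(\mathcal{S};\signal)$, which is exactly \eqref{eq: approximate proj}. Reading off the optimal support $\widehat{\mathcal{S}}$ (the maximizer of $F$) then gives $\mathcal{P}_{\constraint}(\signal) = (\signal)_{\widehat{\mathcal{S}}}$, i.e. hard thresholding onto the winning support.

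The one place that needs care, rather than being purely mechanical, is feasibility of the candidate $w = (\signal)_\mathcal{S}$: its support is a subset of $\mathcal{S}$, and can be a \emph{strict} subset when $\signal$ has zero entries inside $\mathcal{S}$. I would invoke the standing downward-compatibility assumption on $\constraint$ to guarantee that this (possibly smaller) support still lies in $\constraint$, so that every inner optimizer is genuinely feasible for \eqref{eq:proj} and the reduction to \eqref{eq: approximate proj} is exact. The same remark clarifies that ${\rm supp}(\mathcal{P}_{\constraint}(\signal))$ coincides with the maximizing set $\widehat{\mathcal{S}}$ up to coordinates on which $\signal$ already vanishes.
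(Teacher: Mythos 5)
Your proof is correct. The paper does not actually prove this lemma in-text---it defers the proof to the cited reference \cite{KyrillidisCevherSketching}---so there is no in-paper argument to compare against; your coordinate-separation decomposition (inner minimization over $w$ supported inside a fixed $\mathcal{S}$, outer maximization of $F(\mathcal{S};\signal)$ over $\constraint$) is the natural and standard route, and you correctly flagged the one genuine subtlety: feasibility of the inner optimizer $(\signal)_{\mathcal{S}}$ when ${\rm supp}\left((\signal)_{\mathcal{S}}\right)$ is a strict subset of $\mathcal{S}$, which is exactly where the paper's standing downward-compatibility assumption on $\constraint$ is needed.
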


The following proposition refines this observation to further accentuate the algorithmic implications for CSMs:
\begin{prop}[CSM projections via ILP's]{\label{prop:int}} The problem \eqref{eq: approximate proj} is equivalent to the following integer linear program (ILP): 
\begin{equation}\label{eq: ILP}
{\rm supp}\argmin_{\substack{z: [z]_i \in \lbrace 0,1 \rbrace, \\ \text{supp}(z) \in \constraint}}\big\lbrace w^Tz: [w]_i = -|[\signal]_i|^2 \big \rbrace, 
\end{equation} where $[z]_i$,  ($i = 1, \dots, n$), are support indicator variables.
\end{prop}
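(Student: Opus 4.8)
The plan is to exhibit a bijection between index subsets and binary vectors and then transport both the feasible region and the objective of \eqref{eq: approximate proj} across it. To each candidate support $\mathcal{S} \subseteq \mathcal{N}$ I associate the indicator vector $z \in \{0,1\}^\dimension$ with $[z]_i = 1$ precisely when $i \in \mathcal{S}$, so that $\mathcal{S} = \text{supp}(z)$; conversely every $z \in \{0,1\}^\dimension$ comes from a unique such $\mathcal{S}$. This correspondence is a bijection between $2^{\mathcal{N}}$ and $\{0,1\}^\dimension$, and it is essentially the only device the argument needs.

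First I would verify that feasibility is preserved: since $\mathcal{S} = \text{supp}(z)$ by construction, the constraint $\mathcal{S} \in \constraint$ is literally the constraint $\text{supp}(z) \in \constraint$, so the two feasible sets coincide pointwise under the bijection. Next I would match the objectives. Gating the sum in $F(\mathcal{S};\signal) = \sum_{i \in \mathcal{S}} |[\signal]_i|^2$ by the indicator rewrites it as $\sum_{i=1}^\dimension |[\signal]_i|^2\,[z]_i$, and setting $[w]_i = -|[\signal]_i|^2$ turns this into $F(\mathcal{S};\signal) = -w^T z$, a linear functional of $z$. Hence maximizing $F$ over $\mathcal{S} \in \constraint$ is the same as maximizing $-w^T z$, equivalently minimizing $w^T z$, over $\{z \in \{0,1\}^\dimension : \text{supp}(z) \in \constraint\}$. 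A linear objective in $\{0,1\}$-variables subject to the CSM membership constraint is exactly an instance of \eqref{eq: ILP}, and reading off the support of the minimizer $z^*$ returns the maximizing set $\widehat{\mathcal{S}}$, which by Lemma \ref{lemma:mod} equals $\text{supp}(\mathcal{P}_{\constraint}(\signal))$.

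Since this is a change of variables rather than a genuine inequality chain, the main obstacle is only bookkeeping around degeneracy. Because $[w]_i = -|[\signal]_i|^2 \le 0$, indices with $[\signal]_i = 0$ contribute zero to $w^T z$ and may be toggled on in an optimal $z^*$ without changing its value, while ties among equal magnitudes make neither the maximizer of $F$ nor the minimizer of $w^T z$ unique. I would address this by stating the equivalence at the level of optimal values and feasible supports, and by observing that any such ambiguity is washed out by the hard thresholding of Lemma \ref{lemma:mod}: appending indices where $[\signal]_i = 0$ leaves $(\signal)_{\widehat{\mathcal{S}}}$ unchanged, so $\text{supp}(\mathcal{P}_{\constraint}(\signal))$ is well-defined and agrees with $\text{supp}(z^*)$ on the nonzero coordinates. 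This linearization is also what makes the downstream matroid- and knapsack-based solvers applicable, which is the algorithmic payoff of the reformulation.
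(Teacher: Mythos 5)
Your proof is correct and is precisely the argument the paper has in mind: the paper omits the proof of this proposition as ``straightforward,'' and your indicator-vector bijection, the rewriting of $F(\mathcal{S};\signal)=\sum_{i\in\mathcal{S}}\abs{[\signal]_i}^2$ as $-w^Tz$, and the resulting max-to-min conversion constitute exactly that straightforward linearization. Your additional bookkeeping on zero entries and ties (resolved by noting the hard-thresholded vector of Lemma \ref{lemma:mod} is unaffected) is a sound treatment of the only degeneracies, so nothing is missing.
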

\noindent The proof of Proposition \ref{prop:int} is straightforward and is omitted.

Regardless of whether we use a dynamic program, a greedy combinatorial algorithm, or an ILP solver, the formulations \eqref{eq: approximate proj} or \eqref{eq: ILP} make the underlying tractability of the combinatorial selection explicit. We highlight this notion via the polynomial-time modular $\epsilon$-approximation property ($\text{PMAP}_{\epsilon}$):

\begin{definition}[PMAP$_\epsilon$ \cite{KyrillidisCevherSketching}] A CSM has the PMAP$_\epsilon$ with constant $\epsilon$, if the modular subset selection problem \eqref{eq: approximate proj} or the ILP \eqref{eq: ILP}  admit an $\epsilon$-approximation scheme with polynomial or pseudo-polynomial time complexity as a function of $\dimension$, $\forall \signal \in \R^\dimension$. Denoting the $\epsilon$-approximate solution of \eqref{eq: approximate proj} or \eqref{eq: ILP} as $\widehat{\mathcal{S}}_\epsilon$, this means $F(\widehat{\mathcal{S}}_\epsilon;\signal) \ge (1-\epsilon) \max_{\mathcal{S} \in \constraint} F(\mathcal{S};\signal)$.
\end{definition}

In this paper, we focus and elaborate on CSMs with PMAP$_0$. \vspace{-0.2cm} 
\subsection{Example CSMs with PMAP$_0$}
\textbf{Matroids:} By matroid, we mean that $\constraint = (\mathcal{N}, \mathcal{I})$ is a finite collection of subsets of $\mathcal{N}$ that satisfies three conditions: $(i)$ $\lbrace \emptyset \rbrace \in \mathcal{I}$, $(ii)$ if $\mathcal{S}$ is in $\mathcal{I}$, then any subset of $\mathcal{S}$ is also in $\mathcal{I}$, and (iii) for $\mathcal{S}_1,\mathcal{S}_2 \in \mathcal{I}$ and $\abs{\mathcal{S}_1}>\abs{\mathcal{S}_2}$, there is an element ${s}\in \mathcal{S}_1\setminus\mathcal{S}_2$ such that $\mathcal{S}_2\cup \{s\}$ is in $\mathcal{I}$. As a simple example, the unstructured sparsity model (i.e., $x$ is $\sparsity$-sparse) forms a \emph{uniform matroid} as it is defined as the union of all subsets of $\mathcal{N}$  with cardinality $\sparsity$ or less. When $\constraint$ forms a matroid, the greedy basis algorithm can efficiently compute \eqref{eq:proj} by solving \eqref{eq: approximate proj} \cite{nemhauser1988integer} where sorting and selecting the $\sparsity$ largest elements in absolute value is suffcient to obtain the exact projection.  

Moreover, it turns out that this particular perspective provides a principled and tractable approach to encode an interesting class of \emph{matroid-structured sparsity models}. The recipe is quite simple: we seek the intersection of a structure provider matroid (e.g., partition, cographic/graphic, disjoint path, or matching matroid) with the sparsity provider uniform matroid. While the intersection of two matroids is not a matroid in general, we can prove that the intersection of the uniform matroid with any other matroid satisfies the conditions above.

\textbf{Linear support constraints:} Many interesting CSMs  $\constraint$ can be encoded using  {\it linear support constraints} of the form:
\begin{align}\nonumber
\constraint = \bigcup_{\forall z\in \mathfrak{Z}} {\rm supp}\left(z\right) ,~\mathfrak{Z}:=\left\{[z]_i\in \{0,1\}:  A z \leq b \right\},
\end{align} where $[A,b]$ is an integral matrix, and the first row of $A$ is all 1's and $[b]_1=\sparsity$. As a basic example, the neuronal spike model of \cite{hegde2009compressive} is based on linear support constraints where each spike respects a minimum refractory distance to each other. 

A key observation is that if each of the nonempty faces of $\mathfrak{Z}$ contains an integral point (i.e., forming an integral polyhedra), then convex relaxation methods can {\it exactly} obtain the correct integer solutions in polynomial time. In general, checking the integrality of $\mathfrak{Z}$ is NP-Hard. However, if $\mathfrak{Z}$ is integral and non-empty for all integral $b$, then a necessary condition is that $A$ be a totally unimodular (TU) matrix \cite{nemhauser1988integer}. A matrix is totally unimodular if the determinant of each square submatrix is equal to 0,1, or -1. Example TU matrices include interval, perfect, and network matrices \cite{nemhauser1988integer}. As expected, the constraint matrix $A$ of \cite{hegde2009compressive} is TU. Moreover, it is easy to verify that the sparse disjoint group model of \cite{friedman2010note} also defines a TU constraint, where groups have individual sparsity budgets.

\subsection{How about PMAP$_{\epsilon}$?}
For completeness and due to lack of space, we only mention PMAP$_{\epsilon}$, which extends the breath of the model-CS approach. For a detailed treatment of PMAP$_\epsilon$ and \clash, c.f. \cite{KyrillidisCevherSketching}, which describes multi-knapsack CSMs as a concrete example.  Moreover, for many of the PMAP$_0$ examples above, we can employ $\epsilon$-approximate---randomized---techniques to reduce computational cost. 

\section{The \clash algorithm}\label{sec: algo}

The new \clash algorithm obtains approximate solutions to the LASSO problem in \eqref{eq: LASSO} with the added twist that the solution must live within the CSM, as defined by $\constraint$: 
\begin{align}\label{eq: cLASSO}
\widehat{\signal}_{\text{\clash}} = \arg\min \big \lbrace f(x): \|\signal\|_1\le \lambda, \text{supp}(\signal) \in \constraint \big \rbrace. 
\end{align}
When available, using the CSM constraint $\constraint$ in addition to the $\ell_1$-norm constraint enhances learning in two important ways. First, the combinatorial constraints restricts the LASSO solution to exhibit true model-based supports, {\it increasing the interpretability} of the solution without relaxing $\constraint$ into a convex norm. Second, it empirically requires much fewer number of samples to obtain the true solution than both the LASSO and the model-CS approaches.

We provide a pseudo-code of an example implementation of \clash in Algorithm \ref{algo: class}. One can think of alternative ways of implementing \clash, such as single gradient updates in Step 2, or removing Step 4 altogether. While such changes may lead to different---possibly better---approximation guarantees for the solution of \eqref{eq: cLASSO}, we observe degradation in the empirical performance of the algorithm as compared to this implementation, whose guarantees are as follows:

\begin{theorem}[Iteration invariant]\label{thm: iteration invariant} Let $ \bestsignal \in \mathbb{R}^\dimension $ be the true vector that satisfies the constraints of \eqref{eq: cLASSO} and let $\obs = \sensing \bestsignal + \noise $ be the set of observations with additive error $\noise \in \mathbb{R}^{\numsam}$. Then,  the $i$-th iterate $\signal_i$ of \clash satisfies the following recursion: 
\begin{align}
\vectornorm{\signal_{i+1} - \bestsignal}_2 &\leq \rho \vectornorm{\signal_i - \bestsignal}_2 + c_1(\delta_{2\sparsity}, \delta_{3\sparsity})\vectornorm{\noise}_2  \nonumber \vspace{-0.2cm}
\end{align}
where $ \rho \triangleq \frac{\delta_{3\sparsity} + \delta_{2\sparsity}}{\sqrt{1-\delta_{2\sparsity}^2}} \sqrt{\frac{1+3\delta_{3\sparsity}^2}{1-\delta_{3\sparsity}^2}}  $ and $ c_1(\delta_{2\sparsity}, \delta_{3\sparsity}) $ is a constant defined in \cite{clash}. The iterations contract when $\delta_{3\sparsity} < 0.3658$.
\end{theorem}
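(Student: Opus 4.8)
The plan is to prove the recursion by propagating the error vector $e_i := \signal_i - \bestsignal$ through the successive operations of Algorithm \ref{algo: class} and bounding each operation with the RIP. I would fix $\mathcal{X}^\ast := \text{supp}(\bestsignal) \in \constraint$ and $\mathcal{X}_i := \text{supp}(\signal_i) \in \constraint$; since both are $\sparsity$-sparse and lie in $\constraint$, their union has cardinality at most $2\sparsity$. The identification/merge step augments $\mathcal{X}_i$ with the support selected by an exact (PMAP$_0$) Euclidean projection of the gradient $\nabla f(\signal_i)$ onto $\constraint$ (Lemma \ref{lemma:mod}), producing an active set $\mathcal{S}$ of size at most $2\sparsity$. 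The relevant combined support $\mathcal{S} \cup \mathcal{X}^\ast$ then has size at most $3\sparsity$, which is exactly why $\delta_{2\sparsity}$ and $\delta_{3\sparsity}$ are the only isometry constants that appear.

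The first quantitative step analyzes the (projected) gradient/debiasing update on the active set $\mathcal{S}$. Substituting $\obs = \sensing\bestsignal + \noise$ into $\nabla f(\signal_i) = -2\sensing^T(\obs - \sensing\signal_i) = 2\sensing^T\sensing e_i - 2\sensing^T\noise$ and splitting $\sensing^T\sensing$ into $\mathbb{I}$ plus its RIP deviation, I would separate the diagonal Gram block on $\mathcal{S}$ from the off-diagonal block coupling $\mathcal{S}$ to the residual support outside $\mathcal{S}$. Inverting the Gram matrix restricted to the $\le 2\sparsity$-support yields the factor $1/\sqrt{1-\delta_{2\sparsity}^2}$, the off-diagonal coupling across the $\le 3\sparsity$-support yields the numerator term $\delta_{3\sparsity}+\delta_{2\sparsity}$, and the terms involving $\noise$ collect into $c_1(\delta_{2\sparsity},\delta_{3\sparsity})\vectornorm{\noise}_2$. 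This gives a bound on $\vectornorm{b_{\mathcal{S}} - \bestsignal}_2$ in terms of $\vectornorm{e_i}_2$, where $b$ denotes the output of the update restricted to $\mathcal{S}$.

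The second step handles the combinatorial selection (hard thresholding) onto $\constraint$ and the final $\ell_1$-constrained refinement of Step 4. Because $\bestsignal \in \constraint$ and the projection is exact, I would prove the model-based near-optimality inequality for $\projection$ directly --- comparing $F(\cdot;b)$ evaluated at the selected support against $F(\mathcal{X}^\ast;b)$ --- and convert the resulting support-restricted estimate back to an $\ell_2$ bound through the two-sided RIP frame bounds; this is the mechanism that produces the factor $\sqrt{(1+3\delta_{3\sparsity}^2)/(1-\delta_{3\sparsity}^2)}$. For the final refinement I would use that $\bestsignal$ is feasible for the restricted subproblem ($\vectornorm{\bestsignal}_1 \le \lambda$ and $\text{supp}(\bestsignal)\in\constraint$), so optimality together with RIP guarantees this step cannot inflate the error beyond a controlled RIP factor. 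Chaining the per-step bounds multiplicatively then yields $\vectornorm{e_{i+1}}_2 \le \rho\,\vectornorm{e_i}_2 + c_1(\delta_{2\sparsity},\delta_{3\sparsity})\vectornorm{\noise}_2$ with the stated $\rho$, and imposing $\rho < 1$ (using monotonicity $\delta_{2\sparsity} \le \delta_{3\sparsity}$ to reduce to a single-variable inequality) gives the numerical threshold $\delta_{3\sparsity} < 0.3658$.

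I expect the main obstacle to be the combinatorial-selection step: extracting the sharp expansion factor $\sqrt{(1+3\delta_{3\sparsity}^2)/(1-\delta_{3\sparsity}^2)}$ rather than the crude factor of $2$ that a naive triangle inequality $\vectornorm{\projection(b)-\bestsignal}_2 \le 2\vectornorm{b-\bestsignal}_2$ would give, and doing so while keeping every support invoked in the RIP estimates of size at most $3\sparsity$ through the merge-and-prune cycle. Managing this bookkeeping carefully --- so that the interaction between the $\ell_1$ shrinkage of Step 4 and the combinatorial thresholding does not introduce supports outside $\constraint$ --- is where the delicate part of the argument lies.
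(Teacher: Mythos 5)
Your overall architecture---propagating the error $\signal_i - \bestsignal$ through the four steps of Algorithm~\ref{algo: class}, chaining per-step RIP bounds, and bookkeeping supports at sizes $2\sparsity$ and $3\sparsity$---is exactly the skeleton of the paper's proof (Lemmas \ref{lemma:active_set_exp}, \ref{lemma:greedy}, \ref{lemma:comb_selection}, \ref{lemma:noname} and Corollary \ref{corollary:debias}). The genuine gap is in how you analyze Steps 2 and 4. You treat them as \emph{unconstrained} least squares on the active set, to be handled by ``inverting the Gram matrix restricted to the $\le 2\sparsity$-support.'' But both steps are $\ell_1$-constrained solves, $v_i = \argmin\{\vectornorm{\obs-\sensing v}_2^2 : \vectornorm{v}_1 \le \lambda,\ \mathrm{supp}(v) \subseteq \mathcal{S}_i\}$, and when the $\ell_1$ constraint is active---which is precisely the regime the theorem must cover, since otherwise \clash degenerates to model-based Subspace Pursuit and the shrinkage is vacuous---there is no pseudo-inverse formula and even the orthogonality $\sensing_{\mathcal{S}_i}^T(\obs - \sensing v_i)=0$ fails, so the block-inversion argument collapses. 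The paper replaces it with the convex optimality condition $\langle \nabla f(v_i), \psi - v_i\rangle \ge 0$ for all feasible $\psi$, instantiated at $\psi = (\bestsignal)_{\mathcal{S}_i}$, which is feasible exactly because $\vectornorm{\bestsignal}_1 \le \lambda$ and $\constraint$ is downward compatible; this yields $\vectornorm{(v_i-\bestsignal)_{\mathcal{S}_i}}_2 \le \delta_{3\sparsity}\vectornorm{v_i-\bestsignal}_2 + \sqrt{1+\delta_{2\sparsity}}\vectornorm{\noise}_2$, and the factors $1/\sqrt{1-\delta_{3\sparsity}^2}$ (Step 2) and $1/\sqrt{1-\delta_{2\sparsity}^2}$ (Step 4) then come from bounding by the largest root of a quadratic inequality---note that Gram inversion, even where legitimate, would only give $1/(1-\delta)$-type factors, not these.

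This omission also undercuts the part you correctly flag as delicate, the factor $\sqrt{1+3\delta_{3\sparsity}^2}$. In the paper it does not come from converting a support-restricted estimate through two-sided RIP frame bounds; it comes from expanding $\vectornorm{\gamma_i-\bestsignal}_2^2 = \vectornorm{v_i-\bestsignal}_2^2 + \vectornorm{v_i-\gamma_i}_2^2 - 2\langle v_i-\bestsignal, v_i-\gamma_i\rangle$, bounding $\vectornorm{\gamma_i - v_i}_2 \le \vectornorm{(v_i-\bestsignal)_{\mathcal{S}_i}}_2$ by optimality of the exact projection (again legal because $(\bestsignal)_{\mathcal{S}_i}$ has support in $\constraint$), bounding the cross term by RIP, and then substituting the Step-2 estimate above: the coefficient $1+\delta_{3\sparsity}^2+2\delta_{3\sparsity}^2 = 1+3\delta_{3\sparsity}^2$ is an output of the very variational inequality your plan skips. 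Similarly, the numerator $\delta_{3\sparsity}+\delta_{2\sparsity}$ is established in the active-set-expansion lemma (Step 1) by playing the near-optimality of the selected support of the projected gradient against upper \emph{and} lower RIP bounds on terms of the form $\left((\sensing^T\sensing - \mathbb{I})(\bestsignal-\signal_i)\right)_{\mathcal{S}}$, not by off-diagonal coupling inside the least-squares solve. Without these two mechanisms your chaining could still produce \emph{some} contraction with cruder constants, but not the stated $\rho$, and hence not the threshold $\delta_{3\sparsity} < 0.3658$.
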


Theorem \ref{thm: iteration invariant} shows that the isometry requirements of \clash are competitive with the mainstream hard thresholding methods, such as CoSaMP \cite{cosamp} and Subspace Pursuit \cite{SP}, even though it incorporates the $\ell_1$-norm constraints, which, as Section \ref{sec: experiments} illustrates, improves learning performance.

\begin{algorithm}[t]
   \caption{\clash Algorithm}\label{algo: class}
\begin{algorithmic}[1]
   \Statex {\bfseries Input:} $\obs$, $\sensing$, $ \lambda $, $\mathcal{P}_{\constraint}$, Tolerance $ \eta $, MaxIterations
   \Statex {\bfseries Initialize:} $ \signal_0 \leftarrow 0 $, $ \mathcal{X}_0 \leftarrow \lbrace \emptyset \rbrace $, $ i \leftarrow 0 $
   \Statex {\bfseries repeat} 
   \State \hspace{0.16cm} $ \mathcal{S}_i \leftarrow \text{supp}(\mathcal{P}_{\constraint}(\nabla_{\mathcal{X}_i^c} f(\signal_i))) \cup \mathcal{X}_i$
   \State \hspace{0.16cm} $ v_i \leftarrow \argmin_{v: \vectornorm{v}_1 \leq \lambda, \;\text{supp}(v) \in \mathcal{S}_i} \vectornorm{\obs - \sensing v}_2^2 $
   \State \hspace{0.16cm} $ \gamma_i  \leftarrow \mathcal{P}_{\constraint}(v_i)$ with $ \Gamma_i \leftarrow \text{supp}(\gamma_i) $
   \State \hspace{0.16cm} $ \signal_{i+1} \leftarrow \argmin_{\signal: \vectornorm{\signal}_1 \leq \lambda, \;\text{supp}(\signal) \in \Gamma_i} \vectornorm{\obs - \sensing \signal}_2^2 $
   \State \hspace{0.16cm} $ \mathcal{X}_{i+1} \leftarrow \text{supp}(\signal_{i+1}) $
   \Statex $ i \leftarrow i + 1 $.
   \Statex {\bfseries until} $\vectornorm{\signal_i - \signal_{i-1}}_2 \leq \eta \vectornorm{\signal_i}_2 $ or MaxIterations.
\end{algorithmic}
\end{algorithm}

\begin{remark}\label{modelselect}[Model mismatch and selection] Let us assume a generative model $\obs = \sensing \beta + \tilde{\noise}$. Let $\bestsignal$ be the best approximation of $\beta$ in $\constraint$ within $\ell_1$-ball of radius $\lambda$. Then, we can show that the iteration invariant of Theorem \ref{thm: iteration invariant} still holds with $\text{SNR}= \frac{\|\bestsignal\|_2}{\|{\noise}\|_2}$, where
$\|{\noise}\|_2 \le \|\tilde{\noise}\|_2 + \|\sensing(\beta-\bestsignal)\|_2$, where the latter quantity (the impact of mismatch) can be analyzed using the restricted amplification property of $\sensing$ \cite{modelCS}. For instance, when $\constraint$ is the uniform sparsity model, then  $\|\sensing(\beta-\bestsignal)\|_2 \le \sqrt{1+\delta_\sparsity}\left(\|\beta-\bestsignal\|_2 +  \frac{\|\beta-\bestsignal\|_1}{\sqrt{\sparsity}} \right)$, which should presumably be small if the model is selected correctly. 
\end{remark}

In the absence of prior information, we automate the parameter selection by using the Donoho-Tanner phase transition \cite{donoho2005nrp} to choose the maximum $\sparsity$ allowed for a given $(\numsam,\dimension)$-pair, and then by using cross validation to pick $\lambda$ \cite{ward2009compressed}.

\section{Proof of Theorem \ref{thm: iteration invariant}}{\label{sec:ing}}
We sketch the proof of Theorem \ref{thm: iteration invariant}  a l\'a \cite{cosamp} and \cite{foucart2010sparse} assuming the general case of $\text{PMAP}_{\epsilon}$. The details of the proof can be found in an extended version of the paper \cite{clash}. 

\begin{lemma}[Active set expansion - Step 1]{\label{lemma:active_set_exp}}\textit{The support set $ \mathcal{S}_i $, where $ |\mathcal{S}_i| \leq 2\sparsity $, identifies a subspace in $ \constrainttwo $ such that:}
\begin{align}\nonumber
\vectornorm{(\signal_{i} - \bestsignal)_{\mathcal{S}_i^c}}_2 &\leq (\delta_{3\sparsity} + \delta_{2\sparsity} + \sqrt{\epsilon}(1+\delta_{2\sparsity}))\vectornorm{\signal_i - \bestsignal}_2 \nonumber \\ &+ \big(\sqrt{2(1+\delta_{3\sparsity})} + \sqrt{\epsilon(1+\delta_{2\sparsity}})\big)\vectornorm{\noise}_2
\end{align}
\end{lemma}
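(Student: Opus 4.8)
The plan is to follow the identification-step analysis of greedy hard-thresholding schemes (the CoSaMP/HTP style of \cite{cosamp,foucart2010sparse}), adapted to the combinatorial projection and its $\epsilon$-approximate optimality. Write $\Delta_i \triangleq \signal_i - \bestsignal$ and let $\mathcal{X}^\ast \triangleq \text{supp}(\bestsignal)\in\constraint$. Since Step 1 sets $\mathcal{S}_i = \text{supp}(\projection(\nabla_{\mathcal{X}_i^c} f(\signal_i)))\cup\mathcal{X}_i$, and both the projected support and $\mathcal{X}_i$ lie in $\constraint$, we immediately get $\abs{\mathcal{S}_i}\le 2\sparsity$, so $\mathcal{S}_i$ indexes a subspace in $\constrainttwo$. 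Because $\mathcal{X}_i\subseteq\mathcal{S}_i$, the iterate $\signal_i$ vanishes on $\mathcal{S}_i^c$, hence the quantity to bound collapses to the energy of the \emph{missed} true support: with $\mathcal{R}\triangleq \mathcal{X}^\ast\setminus\mathcal{S}_i$ we have $\vectornorm{(\signal_i-\bestsignal)_{\mathcal{S}_i^c}}_2 = \vectornorm{(\bestsignal)_{\mathcal{R}}}_2 = \vectornorm{(\Delta_i)_{\mathcal{R}}}_2$. Finally, I expand the gradient as $\nabla f(\signal_i) = 2\sensing^T\sensing\,\Delta_i - 2\sensing^T\noise$, which is what ties selection back to $\Delta_i$ and $\noise$ (the scalar $2$ is immaterial to the selection step).

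Next I invoke optimality of the combinatorial projection. By Lemma \ref{lemma:mod}, Step 1 chooses $\mathcal{A}_i\triangleq\text{supp}(\projection(\nabla_{\mathcal{X}_i^c} f(\signal_i)))\in\constraint$ to $\epsilon$-maximize the modular energy $\sum_{j\in\mathcal{A}}\abs{[\nabla f(\signal_i)]_j}^2$ over $\mathcal{A}\in\constraint$ supported in $\mathcal{X}_i^c$ (coordinates in $\mathcal{X}_i$ contribute zero, so we may take $\mathcal{A}_i\subseteq\mathcal{X}_i^c$). The set $\mathcal{B}\triangleq\mathcal{X}^\ast\setminus\mathcal{X}_i$ is a legitimate competitor, since $\mathcal{B}\subseteq\mathcal{X}_i^c$ and, by downward compatibility of $\constraint$, $\mathcal{B}\in\constraint$. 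The PMAP$_\epsilon$ guarantee then yields $\vectornorm{\nabla_{\mathcal{A}_i} f(\signal_i)}_2^2 \ge (1-\epsilon)\vectornorm{\nabla_{\mathcal{B}} f(\signal_i)}_2^2$. Writing $\mathcal{D}\triangleq\mathcal{A}_i\setminus\mathcal{B}$ (the falsely selected coordinates) and noting $\mathcal{R}=\mathcal{B}\setminus\mathcal{A}_i$, I cancel the shared block $\mathcal{A}_i\cap\mathcal{B}$ on both sides to reach $(1-\epsilon)\vectornorm{\nabla_{\mathcal{R}} f}_2^2\le\vectornorm{\nabla_{\mathcal{D}} f}_2^2+\epsilon\vectornorm{\nabla_{\mathcal{A}_i\cap\mathcal{B}} f}_2^2$, and take square roots via $\sqrt{u+v}\le\sqrt u+\sqrt v$ to isolate $\vectornorm{\nabla_{\mathcal{R}} f}_2$ up to a $\sqrt\epsilon$ correction driven by $\vectornorm{\nabla_{\mathcal{A}_i\cap\mathcal{B}} f}_2$.

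The remaining work is RIP bookkeeping. The key structural facts are that $\mathcal{D}$ is disjoint from $\text{supp}(\Delta_i)\subseteq\mathcal{X}_i\cup\mathcal{X}^\ast$ (it avoids $\mathcal{X}_i$ by construction and $\mathcal{X}^\ast$ because $\mathcal{D}\cap\mathcal{B}=\emptyset$), whereas $\mathcal{R}\subseteq\mathcal{X}^\ast$ lies inside $\text{supp}(\Delta_i)$. Using $\abs{\mathcal{D}\cup\text{supp}(\Delta_i)}\le 3\sparsity$, near-orthogonality gives $\vectornorm{\tfrac12\nabla_{\mathcal{D}} f}_2\le \delta_{3\sparsity}\vectornorm{\Delta_i}_2+\vectornorm{\sensing_{\mathcal{D}}^T\noise}_2$; on $\mathcal{R}$ I rewrite $\tfrac12\nabla_{\mathcal{R}} f = (\Delta_i)_{\mathcal{R}}+((\sensing^T\sensing-\mathbb{I})\Delta_i)_{\mathcal{R}}-(\sensing^T\noise)_{\mathcal{R}}$ and use $\abs{\text{supp}(\Delta_i)}\le 2\sparsity$ to obtain $\vectornorm{(\Delta_i)_{\mathcal{R}}}_2\le\vectornorm{\tfrac12\nabla_{\mathcal{R}} f}_2+\delta_{2\sparsity}\vectornorm{\Delta_i}_2+\vectornorm{\sensing_{\mathcal{R}}^T\noise}_2$. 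Combining the three estimates produces the signal coefficient $\delta_{3\sparsity}+\delta_{2\sparsity}$; the two noise terms, supported on the disjoint set $\mathcal{D}\cup\mathcal{R}$ of size $\le 2\sparsity$, merge via $a+b\le\sqrt2\sqrt{a^2+b^2}$ and $\vectornorm{\sensing_{\mathcal{D}\cup\mathcal{R}}^T\noise}_2\le\sqrt{1+\delta_{2\sparsity}}\vectornorm{\noise}_2$ into the stated $\sqrt{2(1+\delta_{3\sparsity})}\vectornorm{\noise}_2$ term; and, after bounding $\vectornorm{\nabla_{\mathcal{A}_i\cap\mathcal{B}} f}_2$ by RIP, the $\sqrt\epsilon$ correction supplies the $\sqrt\epsilon(1+\delta_{2\sparsity})\vectornorm{\Delta_i}_2$ and $\sqrt{\epsilon(1+\delta_{2\sparsity})}\vectornorm{\noise}_2$ pieces.

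I expect the main obstacle to be the $\epsilon$-bookkeeping rather than the RIP estimates: cancelling the common block $\mathcal{A}_i\cap\mathcal{B}$ correctly, propagating the $(1-\epsilon)$ factor through the square-root split (absorbing the $1/\sqrt{1-\epsilon}$ into the leading constants), and verifying that the auxiliary energy $\vectornorm{\nabla_{\mathcal{A}_i\cap\mathcal{B}} f}_2$ converts under RIP to exactly the $(1+\delta_{2\sparsity})$-weighted terms appearing in the statement. The cardinality accounting---keeping $\mathcal{D}\cup\text{supp}(\Delta_i)$ at $3\sparsity$ and both $\text{supp}(\Delta_i)$ and $\mathcal{D}\cup\mathcal{R}$ at $2\sparsity$---must be done precisely so that the intended isometry constants $\delta_{2\sparsity},\delta_{3\sparsity}$ surface; everything else reduces to the standard near-orthogonality and operator-norm consequences of the RIP.
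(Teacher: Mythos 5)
Your argument follows the paper's own proof almost step for step: you pit the $\epsilon$-approximately selected support against the feasible competitor $\mathcal{B}=\mathcal{X}^\ast\setminus\mathcal{X}_i$ (the paper equivalently compares $\mathcal{S}_i$ against $\mathcal{X}_i\cup\mathcal{X}^\ast$), cancel the shared block, and then bound the falsely selected block $\mathcal{D}$ by near-orthogonality (giving $\delta_{3\sparsity}\vectornorm{\signal_i-\bestsignal}_2$, valid because $\mathcal{D}$ is disjoint from ${\rm supp}(\signal_i-\bestsignal)$ and $\abs{\mathcal{D}\cup{\rm supp}(\signal_i-\bestsignal)}\le 3\sparsity$) and the missed block $\mathcal{R}$ by the reverse triangle inequality (giving $\delta_{2\sparsity}\vectornorm{\signal_i-\bestsignal}_2$); the identification $\vectornorm{(\signal_i-\bestsignal)_{\mathcal{S}_i^c}}_2=\vectornorm{(\signal_i-\bestsignal)_{\mathcal{R}}}_2$ and the $\sqrt{2}$-merge of the two noise terms are also exactly the paper's moves (your $\sqrt{2(1+\delta_{2\sparsity})}$ is in fact tighter than the stated $\sqrt{2(1+\delta_{3\sparsity})}$, which is harmless).

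The one step that would fail to deliver the lemma as stated is your handling of the $(1-\epsilon)$ factor. From $(1-\epsilon)\vectornorm{\nabla_{\mathcal{R}}f(\signal_i)}_2^2\le\vectornorm{\nabla_{\mathcal{D}}f(\signal_i)}_2^2+\epsilon\vectornorm{\nabla_{\mathcal{A}_i\cap\mathcal{B}}f(\signal_i)}_2^2$, taking square roots and ``absorbing the $1/\sqrt{1-\epsilon}$ into the leading constants'' produces the signal coefficient $\delta_{2\sparsity}+\delta_{3\sparsity}/\sqrt{1-\epsilon}+\sqrt{\epsilon/(1-\epsilon)}\,(1+\delta_{2\sparsity})$, which is strictly larger than the stated $\delta_{3\sparsity}+\delta_{2\sparsity}+\sqrt{\epsilon}(1+\delta_{2\sparsity})$ whenever $\epsilon>0$ (and similarly corrupts the noise coefficient, also breaking the symmetric $\sqrt{2}$-merge). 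The lemma's constants carry no $1/\sqrt{1-\epsilon}$, so your outline as written only proves a weaker inequality. The repair is the rearrangement the paper performs before cancelling: rewrite $(1-\epsilon)a^2\le b^2$ as $a^2\le b^2+\epsilon a^2$ --- equivalently, in your notation, add $\epsilon\vectornorm{\nabla_{\mathcal{R}}f(\signal_i)}_2^2$ to both sides of your cancelled inequality so the right side reconstitutes $\epsilon\vectornorm{\nabla_{\mathcal{B}}f(\signal_i)}_2^2$ --- and only then apply $\sqrt{u+v}\le\sqrt{u}+\sqrt{v}$. This yields $\vectornorm{\nabla_{\mathcal{R}}f(\signal_i)}_2\le\vectornorm{\nabla_{\mathcal{D}}f(\signal_i)}_2+\sqrt{\epsilon}\,\vectornorm{\nabla_{\mathcal{B}}f(\signal_i)}_2$ with no division, and since $\mathcal{B}\cup{\rm supp}(\signal_i-\bestsignal)\subseteq\mathcal{X}_i\cup\mathcal{X}^\ast$ has cardinality at most $2\sparsity$, the RIP gives $\tfrac{1}{2}\vectornorm{\nabla_{\mathcal{B}}f(\signal_i)}_2\le(1+\delta_{2\sparsity})\vectornorm{\signal_i-\bestsignal}_2+\sqrt{1+\delta_{2\sparsity}}\vectornorm{\noise}_2$, supplying exactly the stated $\sqrt{\epsilon}(1+\delta_{2\sparsity})$ and $\sqrt{\epsilon(1+\delta_{2\sparsity})}$ terms. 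With that single substitution, everything else in your outline goes through unchanged.
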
 Lemma \ref{lemma:active_set_exp} states that, at each iteration, Step 1 of \clash identifies a $2\sparsity$ support set such that the unrecovered energy of $\bestsignal$ is bounded. For $\epsilon = 0 $, \clash exactly identifies the support where the projected gradient onto $\constraint$ can make most impact on the loading vector in the support complement of its current solution, which are subsequently merged together.

\begin{lemma}[Greedy descent with least absolute shrinkage - Step 2]{\label{lemma:greedy}} \textit{Let $ \mathcal{S}_i $ be a $ 2\sparsity $-sparse support set. Then, the least squares solution $ v_i $ in step 2 of Algorithm 1 satisfies}
\begin{align}\nonumber
\vectornorm{v_i - \bestsignal}_2 \leq \frac{1}{\sqrt{1-\delta_{3\sparsity}^2}} \vectornorm{(\signal_i - \bestsignal)_{\mathcal{S}_i^c}}_2 + \frac{\sqrt{1+\delta_{2\sparsity}}}{1-\delta_{3\sparsity}} \vectornorm{\noise}_2. 
\end{align}
\end{lemma}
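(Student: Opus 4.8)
The plan is to treat Step 2 as the minimization of the convex quadratic $f(v)=\vectornorm{\obs-\sensing v}_2^2$ over the convex feasible set $\mathcal{K}=\{v:\vectornorm{v}_1\le\lambda,\ \mathrm{supp}(v)\subseteq\mathcal{S}_i\}$, and to compare the minimizer $v_i$ against the feasible anchor point $(\bestsignal)_{\mathcal{S}_i}$, which lies in $\mathcal{K}$ because $\vectornorm{(\bestsignal)_{\mathcal{S}_i}}_1\le\vectornorm{\bestsignal}_1\le\lambda$ and its support is contained in $\mathcal{S}_i$. Since Step 1 guarantees $\mathrm{supp}(\signal_i)\subseteq\mathcal{S}_i$, the right-hand quantity in the statement satisfies $\vectornorm{(\signal_i-\bestsignal)_{\mathcal{S}_i^c}}_2=\vectornorm{(\bestsignal)_{\mathcal{S}_i^c}}_2$, i.e.\ it is exactly the energy of $\bestsignal$ that Step 1 failed to capture.

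First I would split the target error orthogonally. Writing $v_i-\bestsignal=\big(v_i-(\bestsignal)_{\mathcal{S}_i}\big)-(\bestsignal)_{\mathcal{S}_i^c}$, the two summands have disjoint supports, one inside $\mathcal{S}_i$ and one outside, so by the Pythagorean identity $\vectornorm{v_i-\bestsignal}_2^2=\vectornorm{d}_2^2+\vectornorm{(\bestsignal)_{\mathcal{S}_i^c}}_2^2$, where $d:=v_i-(\bestsignal)_{\mathcal{S}_i}$ is supported on the $2\sparsity$-set $\mathcal{S}_i$. Everything therefore reduces to bounding the in-support error $\vectornorm{d}_2$.

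Next I would extract $\vectornorm{d}_2$ from the first-order optimality condition for the constrained minimizer, $\langle\nabla f(v_i),(\bestsignal)_{\mathcal{S}_i}-v_i\rangle\ge 0$. Substituting $\nabla f(v_i)=-2\sensing^T(\obs-\sensing v_i)$ together with $\obs=\sensing\bestsignal+\noise$ and $\bestsignal=(\bestsignal)_{\mathcal{S}_i}+(\bestsignal)_{\mathcal{S}_i^c}$, the inequality rearranges to $\vectornorm{\sensing d}_2^2\le|\langle\sensing(\bestsignal)_{\mathcal{S}_i^c},\sensing d\rangle|+|\langle\noise,\sensing d\rangle|$. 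The first term couples $d$ (on $\mathcal{S}_i$) with the at most $\sparsity$-sparse $(\bestsignal)_{\mathcal{S}_i^c}$ (on $\mathcal{S}_i^c$) through disjoint supports of total size $\le 3\sparsity$, so a standard restricted-orthogonality consequence of the RIP bounds it by $\delta_{3\sparsity}\vectornorm{(\bestsignal)_{\mathcal{S}_i^c}}_2\vectornorm{d}_2$; the noise term is bounded by $\vectornorm{\sensing_{\mathcal{S}_i}^T\noise}_2\vectornorm{d}_2\le\sqrt{1+\delta_{2\sparsity}}\,\vectornorm{\noise}_2\vectornorm{d}_2$. Lower-bounding $\vectornorm{\sensing d}_2^2\ge(1-\delta_{2\sparsity})\vectornorm{d}_2^2$ and cancelling one factor of $\vectornorm{d}_2$ (the case $d=0$ being trivial) then yields a linear estimate of $\vectornorm{d}_2$ in terms of $\vectornorm{(\bestsignal)_{\mathcal{S}_i^c}}_2$ and $\vectornorm{\noise}_2$.

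Finally I would feed this estimate back into the Pythagorean identity. The main obstacle is the bookkeeping: combining the orthogonal split with the isometry constants so that the coefficients collapse to exactly $\frac{1}{\sqrt{1-\delta_{3\sparsity}^2}}$ and $\frac{\sqrt{1+\delta_{2\sparsity}}}{1-\delta_{3\sparsity}}$. This requires merging the $2\sparsity$-level conditioning of $\sensing_{\mathcal{S}_i}$ (which controls the least-squares solve) with the $3\sparsity$-level leakage from the unrecovered support inside a single quadratic relation, exploiting the monotonicity $\delta_{2\sparsity}\le\delta_{3\sparsity}$, rather than bounding the two contributions separately and invoking the triangle inequality, which would inflate the leading constant. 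The noise contribution is routine once this geometry is in place; the delicate point is to retain the $\sqrt{1+\delta_{3\sparsity}}$ factor hidden in the denominator $\sqrt{1-\delta_{3\sparsity}^2}=\sqrt{(1-\delta_{3\sparsity})(1+\delta_{3\sparsity})}$ of the $\vectornorm{(\bestsignal)_{\mathcal{S}_i^c}}_2$ coefficient, which is precisely where a naive split would lose sharpness.
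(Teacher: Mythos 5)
Your setup coincides with the paper's: the anchor $(\bestsignal)_{\mathcal{S}_i}$ is feasible, the first-order optimality condition is tested against it, and the error is split orthogonally as $\vectornorm{v_i-\bestsignal}_2^2=\vectornorm{d}_2^2+\vectornorm{(\bestsignal)_{\mathcal{S}_i^c}}_2^2$ with $d=(v_i-\bestsignal)_{\mathcal{S}_i}$. The gap is in how you extract $\vectornorm{d}_2$. You lower-bound $\vectornorm{\sensing d}_2^2\ge(1-\delta_{2\sparsity})\vectornorm{d}_2^2$, bound the cross term by restricted orthogonality against $(\bestsignal)_{\mathcal{S}_i^c}$ alone, and cancel one factor of $\vectornorm{d}_2$, which yields $\vectornorm{d}_2\le\frac{1}{1-\delta_{2\sparsity}}\left(\delta_{3\sparsity}B+\sqrt{1+\delta_{2\sparsity}}\,N\right)$, where $B=\vectornorm{(\bestsignal)_{\mathcal{S}_i^c}}_2$ and $N=\vectornorm{\noise}_2$. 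Feeding this into the Pythagorean identity gives at best
\[
\vectornorm{v_i-\bestsignal}_2\le\frac{\sqrt{(1-\delta_{2\sparsity})^2+\delta_{3\sparsity}^2}}{1-\delta_{2\sparsity}}\,B+\frac{\sqrt{1+\delta_{2\sparsity}}}{1-\delta_{2\sparsity}}\,N,
\]
and the leading coefficient exceeds $\frac{1}{\sqrt{1-\delta_{3\sparsity}^2}}$ whenever $2\delta_{2\sparsity}-\delta_{2\sparsity}^2>\delta_{3\sparsity}^2$, e.g.\ already when $\delta_{2\sparsity}=\delta_{3\sparsity}$. So this route proves a genuinely different, in general incomparable bound, not the lemma as stated. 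Your closing paragraph names the difficulty correctly, but the remedy you gesture at (exploiting $\delta_{2\sparsity}\le\delta_{3\sparsity}$) cannot repair it: replacing $\delta_{2\sparsity}$ by $\delta_{3\sparsity}$ only enlarges your coefficient further.

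The missing idea is that the cross term must be bounded against the \emph{full} error rather than against $B$, so that the final inequality becomes a quadratic in the unknown $E=\vectornorm{v_i-\bestsignal}_2$ itself. The paper never invokes the lower RIP bound on $\vectornorm{\sensing d}_2$; it works in signal space, writing $\vectornorm{d}_2^2=\langle v_i-\bestsignal,d\rangle$ exactly, adding the nonpositive quantity $\langle \sensing v_i-\obs,\sensing d\rangle$ supplied by optimality, and regrouping to obtain $\vectornorm{d}_2^2\le\langle v_i-\bestsignal,(\mathbb{I}-\sensing^\ast\sensing)d\rangle+\langle\noise,\sensing d\rangle\le\delta_{3\sparsity}\vectornorm{d}_2\,E+\sqrt{1+\delta_{2\sparsity}}\,\vectornorm{d}_2\,N$, hence $\vectornorm{d}_2\le\delta_{3\sparsity}E+\sqrt{1+\delta_{2\sparsity}}\,N$. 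Substituting this into the Pythagorean identity gives $(1-\delta_{3\sparsity}^2)E^2-2\delta_{3\sparsity}\sqrt{1+\delta_{2\sparsity}}\,N E-\left((1+\delta_{2\sparsity})N^2+B^2\right)\le 0$, and bounding $E$ by the largest root of this quadratic produces exactly $\frac{1}{\sqrt{1-\delta_{3\sparsity}^2}}B+\frac{\sqrt{1+\delta_{2\sparsity}}}{1-\delta_{3\sparsity}}N$. In short, the fix is not tighter bookkeeping on your estimate but abandoning the cancel-then-substitute step in favor of this self-referential quadratic argument.
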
 We borrow the proof of Lemma \ref{lemma:greedy} from \cite{foucart2010sparse}. This step improves the objective function $f(\signal)$ as much as possible on the active set in order to arbitrate the active set. The solution simultaneously satisfies the $ \ell_1 $-norm constraint.

Step 3 projects the solution onto $\constraint$, whose action is characterized by the following lemma. Here, we show the $\epsilon$-approximate projection explicitly:
\begin{lemma}[Combinatorial selection - Step 3] {\label{lemma:comb_selection}} Let $ v_i $ be a $ 2\sparsity $-sparse proxy vector with indices in support set $ \mathcal{S}_i $, $ \constraint $ be a CSM and $ \gamma_i $ the projection of $ v_i $ under $ \constraint $. Then:
\begin{align}\nonumber
\vectornorm{\gamma_i  - v_i}_2^2 &\leq (1 - \epsilon) \vectornorm{(v_i - \bestsignal)_{\mathcal{S}_i}}_2^2 + \epsilon \vectornorm{v_i}_2^2. 
\end{align}
\end{lemma}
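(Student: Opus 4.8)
The plan is to exploit the modular structure of the Euclidean projection established in Lemma~\ref{lemma:mod} together with the PMAP$_\epsilon$ guarantee, so that the whole statement reduces to a one-line comparison against a well-chosen competitor support. First I would record that, by Lemma~\ref{lemma:mod}, the ($\epsilon$-approximate) projection acts by pure hard thresholding onto its selected support: $\gamma_i = (v_i)_{\Gamma_i}$ with $\Gamma_i = \text{supp}(\gamma_i) \in \constraint$. Hence the quantity to be bounded is simply the discarded energy, $\vectornorm{\gamma_i - v_i}_2^2 = \vectornorm{(v_i)_{\Gamma_i^c}}_2^2 = \vectornorm{v_i}_2^2 - F(\Gamma_i; v_i)$, where the last equality uses the identity $F(\mathcal{S};v_i) = \vectornorm{v_i}_2^2 - \vectornorm{(v_i)_{\mathcal{S}^c}}_2^2$ coming from the definition of the modular set function.

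Next I would invoke the PMAP$_\epsilon$ property, which asserts $F(\Gamma_i; v_i) \ge (1-\epsilon)\max_{\mathcal{S}\in\constraint} F(\mathcal{S};v_i) \ge (1-\epsilon) F(\mathcal{T}; v_i)$ for \emph{every} competitor support $\mathcal{T}\in\constraint$. Substituting the modular identity for $F(\mathcal{T};v_i)$ into this chain and rearranging yields the key inequality $\vectornorm{(v_i)_{\Gamma_i^c}}_2^2 \le (1-\epsilon)\vectornorm{(v_i)_{\mathcal{T}^c}}_2^2 + \epsilon \vectornorm{v_i}_2^2$, valid for any feasible $\mathcal{T}$. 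The slack term $\epsilon\vectornorm{v_i}_2^2$ emerges precisely from applying the factor $(1-\epsilon)$ to the constant $\vectornorm{v_i}_2^2$ sitting inside $F$, which is where the approximation loss is absorbed.

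Finally I would specialize the competitor to $\mathcal{T} = \text{supp}(\bestsignal)$, which is feasible because $\bestsignal$ satisfies the constraints of \eqref{eq: cLASSO}, so $\text{supp}(\bestsignal)\in\constraint$. It then remains to verify $\vectornorm{(v_i)_{\mathcal{T}^c}}_2^2 \le \vectornorm{(v_i - \bestsignal)_{\mathcal{S}_i}}_2^2$. Since $v_i$ vanishes off $\mathcal{S}_i$, the left-hand side equals $\sum_{j\in\mathcal{S}_i\setminus\mathcal{T}}\abs{[v_i]_j}^2$; and since $\bestsignal$ vanishes off $\mathcal{T}$, on each such index $[v_i]_j = [v_i-\bestsignal]_j$, so this equals $\sum_{j\in\mathcal{S}_i\setminus\mathcal{T}}\abs{[v_i-\bestsignal]_j}^2 \le \sum_{j\in\mathcal{S}_i}\abs{[v_i-\bestsignal]_j}^2 = \vectornorm{(v_i-\bestsignal)_{\mathcal{S}_i}}_2^2$, which closes the argument.

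The only genuine obstacle is the choice of competitor set and the support bookkeeping in this last step; everything else is the modular algebra of $F$. The crucial observation is that taking $\mathcal{T}=\text{supp}(\bestsignal)$ is exactly what converts the intrinsic ``energy outside $\Gamma_i$'' bound into the estimation-error quantity $\vectornorm{(v_i-\bestsignal)_{\mathcal{S}_i}}_2$ that Lemma~\ref{lemma:greedy} has already controlled, and that enlarging the index set $\mathcal{S}_i\setminus\mathcal{T}$ to $\mathcal{S}_i$ is harmless since it only adds nonnegative terms.
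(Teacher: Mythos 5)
Your proof is correct and takes essentially the same route as the paper's: both rest on the variance-reduction identity, the PMAP$_\epsilon$ guarantee, and a feasible competitor anchored at $\text{supp}(\bestsignal)$ (whose feasibility comes from $\bestsignal$ satisfying \eqref{eq: cLASSO} and downward compatibility). The only cosmetic difference is that the paper introduces the exact projection $\gamma_i^{\text{opt}}$ as an intermediate and bounds $\vectornorm{\gamma_i^{\text{opt}} - v_i}_2 \le \vectornorm{(v_i-\bestsignal)_{\mathcal{S}_i}}_2$ via the competitor $(\bestsignal)_{\mathcal{S}_i}$, whereas you compare $F$-values directly against $\mathcal{T}=\text{supp}(\bestsignal)$, i.e., use the competitor $(v_i)_{\mathcal{T}}$ — the same argument in slightly different clothing.
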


Step 4 requires the following Corollary to Lemma \ref{lemma:greedy}:
\begin{corollary}[De-bias - Step 4]\label{corollary:debias} Let $ \Gamma_i $ be the support set of a proxy vector $ \gamma_i $ where $ |\Gamma_i| \leq \sparsity $. Then, the least squares solution $ \signal_{i+1} $ in Step 4 satisfies
\begin{align}\nonumber
\vectornorm{\signal_{i+1} - \bestsignal}_2 \leq \frac{1}{\sqrt{1 - \delta_{2\sparsity}^2}} \vectornorm{\gamma_i  - \bestsignal}_2 + \frac{\sqrt{1+\delta_\sparsity}}{1 - \delta_{2\sparsity}} \vectornorm{\noise}_2. 
\end{align}
\end{corollary}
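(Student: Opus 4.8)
The plan is to treat this statement as a direct re-instantiation of Lemma \ref{lemma:greedy}: Step 4 is again an $\ell_1$-constrained least-squares fit, but now over the smaller support $\Gamma_i$ with $|\Gamma_i|\le\sparsity$ rather than a $2\sparsity$-sparse set, so I expect every isometry order appearing in the proof of Lemma \ref{lemma:greedy} to drop by one unit of $\sparsity$ (the coupling term moves from $\delta_{3\sparsity}$ to $\delta_{2\sparsity}$, and the noise term's order from $2\sparsity$ to $\sparsity$). First I would record the two facts that drive the argument: since $\text{supp}(\bestsignal)\in\constraint$ we have $|\text{supp}(\bestsignal)|\le\sparsity$, hence the error $\signal_{i+1}-\bestsignal$ is supported on $\Gamma_i\cup\text{supp}(\bestsignal)$, a set of size at most $2\sparsity$; and since $\gamma_i$ is supported exactly on $\Gamma_i$, its complement restriction satisfies $\vectornorm{(\gamma_i-\bestsignal)_{\Gamma_i^c}}_2=\vectornorm{(\bestsignal)_{\Gamma_i^c}}_2\le\vectornorm{\gamma_i-\bestsignal}_2$, which is precisely what lets me report the cleaner full-norm bound in the statement.

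The key step is to extract a usable optimality condition despite the $\ell_1$ constraint. Rather than asserting that the Step-4 residual is orthogonal to $\text{range}(\sensing_{\Gamma_i})$ — which holds only if the $\ell_1$ ball is inactive — I would use the \emph{feasible competitor} $(\bestsignal)_{\Gamma_i}$. It is feasible for Step 4 because $\text{supp}((\bestsignal)_{\Gamma_i})\subseteq\Gamma_i$ and $\vectornorm{(\bestsignal)_{\Gamma_i}}_1\le\vectornorm{\bestsignal}_1\le\lambda$, so optimality of $\signal_{i+1}$ gives the one-sided inequality $\vectornorm{\obs-\sensing\signal_{i+1}}_2\le\vectornorm{\obs-\sensing(\bestsignal)_{\Gamma_i}}_2$. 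Substituting $\obs=\sensing\bestsignal+\noise$ turns the right-hand side into $\vectornorm{\sensing(\bestsignal)_{\Gamma_i^c}+\noise}_2$ and the left into $\vectornorm{-\sensing(\signal_{i+1}-\bestsignal)+\noise}_2$; expanding the squares and cancelling $\vectornorm{\noise}_2^2$ isolates $\vectornorm{\sensing(\signal_{i+1}-\bestsignal)}_2^2$ against cross terms, reproducing the effect of the noiseless orthogonality identity but now valid on the $\ell_1$ ball.

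From here the estimate is routine RIP bookkeeping, which I would run in parallel with Lemma \ref{lemma:greedy}. I split $\signal_{i+1}-\bestsignal$ by Pythagoras into its $\Gamma_i$-part $\signal_{i+1}-(\bestsignal)_{\Gamma_i}$ and its complement part $-(\bestsignal)_{\Gamma_i^c}$; lower-bound $\vectornorm{\sensing(\signal_{i+1}-\bestsignal)}_2^2\ge(1-\delta_{2\sparsity})\vectornorm{\signal_{i+1}-\bestsignal}_2^2$ using the $2\sparsity$-sparsity of the error; bound the coupling between the disjoint $\Gamma_i$- and $\Gamma_i^c$-blocks by $\delta_{2\sparsity}$ (their supports union to at most $2\sparsity$); and bound the noise inner product by $\sqrt{1+\delta_\sparsity}\vectornorm{\noise}_2$ times the relevant $\Gamma_i$-supported factor (the order is $\sparsity$ because that block lives on $\Gamma_i$). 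Collecting these and dividing out produces the coefficient $\tfrac{1}{\sqrt{1-\delta_{2\sparsity}^2}}$ on the complement energy and $\tfrac{\sqrt{1+\delta_\sparsity}}{1-\delta_{2\sparsity}}$ on $\vectornorm{\noise}_2$, after which the relaxation $\vectornorm{(\bestsignal)_{\Gamma_i^c}}_2\le\vectornorm{\gamma_i-\bestsignal}_2$ from the first paragraph yields the stated form.

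The main obstacle, as flagged above, is the $\ell_1$ constraint: the clean derivation of the $\tfrac{1}{\sqrt{1-\delta_{2\sparsity}^2}}$ factor in Lemma \ref{lemma:greedy} rests on a residual-orthogonality (Pythagorean) identity that the constraint can invalidate, and the fix is to lean only on the one-sided optimality inequality against $(\bestsignal)_{\Gamma_i}$. The only other place demanding care is label-keeping on the isometry orders — charging each inner product to $\delta_\sparsity$ or $\delta_{2\sparsity}$ according to whether the vectors involved are $\sparsity$- or $2\sparsity$-sparse — since the precise claimed constants depend on not over-counting the support sizes.
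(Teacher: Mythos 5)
Your scaffolding matches the paper's proof in several places: the feasible competitor $(\bestsignal)_{\Gamma_i}$ (feasible since $\vectornorm{(\bestsignal)_{\Gamma_i}}_1\le\vectornorm{\bestsignal}_1\le\lambda$), the Pythagorean split over $\Gamma_i$ and $\Gamma_i^c$, the closing observation $\vectornorm{(\signal_{i+1}-\bestsignal)_{\Gamma_i^c}}_2=\vectornorm{(\gamma_i-\bestsignal)_{\Gamma_i^c}}_2\le\vectornorm{\gamma_i-\bestsignal}_2$, and the intuition that the whole argument is Lemma \ref{lemma:greedy} re-run at a lower isometry order (the paper indeed says ``the proof is similar to the proof of the Greedy descent step''). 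However, there is a genuine gap at the key step: you replace optimality of $\signal_{i+1}$ by the one-sided \emph{function-value} comparison $\vectornorm{\obs-\sensing\signal_{i+1}}_2\le\vectornorm{\obs-\sensing(\bestsignal)_{\Gamma_i}}_2$, and that inequality is provably too weak to yield the stated constants. For the quadratic objective one has $f((\bestsignal)_{\Gamma_i})-f(\signal_{i+1})=\langle\nabla f(\signal_{i+1}),(\bestsignal)_{\Gamma_i}-\signal_{i+1}\rangle+\vectornorm{\sensing((\bestsignal)_{\Gamma_i}-\signal_{i+1})}_2^2$, so your inequality discards exactly the quadratic term $\vectornorm{\sensing((\bestsignal)_{\Gamma_i}-\signal_{i+1})}_2^2$ relative to the first-order condition, and that term is of the same order as the quantities being bounded. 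Concretely, already in the noiseless case your route gives $(1-\delta_{2\sparsity})\vectornorm{\signal_{i+1}-\bestsignal}_2^2\le\vectornorm{\sensing(\bestsignal)_{\Gamma_i^c}}_2^2\le(1+\delta_\sparsity)\vectornorm{(\bestsignal)_{\Gamma_i^c}}_2^2$, i.e.\ the coefficient $\sqrt{(1+\delta_\sparsity)/(1-\delta_{2\sparsity})}$, which strictly exceeds the claimed $1/\sqrt{1-\delta_{2\sparsity}^2}$ whenever the isometry constants are positive (compare squares: $(1+\delta_\sparsity)(1+\delta_{2\sparsity})>1$); the noise coefficient likewise comes out roughly doubled. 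No re-arrangement of the RIP bookkeeping recovers this, so the third paragraph's claim that your estimate ``produces'' the stated coefficients is unsubstantiated.

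The repair is the route you explicitly declined, and your reason for declining it conflates two different conditions. Residual orthogonality ($\langle\sensing\signal_{i+1}-\obs,\sensing z\rangle=0$ for all $z$ supported on $\Gamma_i$) indeed fails when the $\ell_1$ ball is active, but the paper never uses it; it uses the \emph{variational inequality} for minimization over a convex set (the ``Optimality condition'' lemma in the appendix), which is valid whether or not the constraint is active: $\langle\nabla f(\signal_{i+1}),(\bestsignal-\signal_{i+1})_{\Gamma_i}\rangle\ge 0$, equivalently $\langle\sensing\signal_{i+1}-\obs,\sensing(\signal_{i+1}-\bestsignal)_{\Gamma_i}\rangle\le 0$ --- note this instantiates the lemma at precisely your feasible competitor $(\bestsignal)_{\Gamma_i}$. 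With this one-sided gradient inequality, the paper expands $\vectornorm{(\signal_{i+1}-\bestsignal)_{\Gamma_i}}_2^2=\langle\signal_{i+1}-\bestsignal,(\signal_{i+1}-\bestsignal)_{\Gamma_i}\rangle$, adds the non-positive term, regroups into $\langle\signal_{i+1}-\bestsignal,(\mathbb{I}-\sensing^\ast\sensing)(\signal_{i+1}-\bestsignal)_{\Gamma_i}\rangle+\langle\noise,\sensing(\signal_{i+1}-\bestsignal)_{\Gamma_i}\rangle$, and applies Cauchy--Schwarz with RIP to obtain $\vectornorm{(\signal_{i+1}-\bestsignal)_{\Gamma_i}}_2\le\delta_{2\sparsity}\vectornorm{\signal_{i+1}-\bestsignal}_2+\sqrt{1+\delta_\sparsity}\vectornorm{\noise}_2$; substituting into the Pythagorean split and bounding by the largest root of the resulting quadratic in $\vectornorm{\signal_{i+1}-\bestsignal}_2$ delivers exactly $1/\sqrt{1-\delta_{2\sparsity}^2}$ and $\sqrt{1+\delta_\sparsity}/(1-\delta_{2\sparsity})$. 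Everything else in your outline can then be kept as is.
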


Step 4 de-biases the current result on the putative solution support. Its characterization connects Lemmas \ref{lemma:greedy} and \ref{lemma:comb_selection}:
\begin{lemma}\label{lemma:noname} Let $ v_i $ be the least squares solution of the greedy descent step (step 5) and  $ \gamma_i $ be a proxy vector to $ v_i $ after applying Combinatorial selection step. Then, $ \vectornorm{\gamma_i  - \bestsignal}_2 $ can be expressed in terms of the distance from $ v_i $ to $ \bestsignal $ as follows:
\begin{align}
&\vectornorm{\gamma_i  - \bestsignal}_2 \nonumber \\ 
&\leq \sqrt{1 + \big((1-\epsilon) + 2\sqrt{1-\epsilon}\big)\delta_{3\sparsity}^2 + 2\delta_{3\sparsity}\sqrt{\epsilon} + \epsilon}\cdot \vectornorm{v_i - \bestsignal}_2 \nonumber \\ &+ D_1\vectornorm{\noise}_2 + D_2 \vectornorm{\bestsignal}_2 + D_3 \sqrt{\vectornorm{\bestsignal}_2 \vectornorm{\noise}_2}, \label{eq:74}
\end{align} \textit{where $ D_1, D_2, D_3 $ are constants depending on $ \epsilon, \delta_{2\sparsity}, \delta_{3\sparsity} $.}
\end{lemma}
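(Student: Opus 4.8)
The plan is to control $\vectornorm{\gamma_i - \bestsignal}_2$ by expanding the squared error around the Step-2 iterate $v_i$ and feeding in the combinatorial-selection bound of Lemma \ref{lemma:comb_selection}. I would start from the exact identity
\begin{align}
\vectornorm{\gamma_i - \bestsignal}_2^2 = \vectornorm{\gamma_i - v_i}_2^2 + 2\langle \gamma_i - v_i,\, v_i - \bestsignal\rangle + \vectornorm{v_i - \bestsignal}_2^2. \nonumber
\end{align}
The crucial structural observation is that $\gamma_i = \mathcal{P}_{\constraint}(v_i)$ retains only a subset of $\mathrm{supp}(v_i)=\mathcal{S}_i$, so $\gamma_i - v_i$ is supported on $\mathcal{S}_i$. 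Hence the inner product collapses to $\langle \gamma_i - v_i,\, (v_i - \bestsignal)_{\mathcal{S}_i}\rangle$, and Cauchy--Schwarz bounds the cross term by $\vectornorm{\gamma_i - v_i}_2\,\vectornorm{(v_i - \bestsignal)_{\mathcal{S}_i}}_2$.

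I would then insert Lemma \ref{lemma:comb_selection} directly for the squared term $\vectornorm{\gamma_i - v_i}_2^2$, and bound $\vectornorm{\gamma_i - v_i}_2 \le \sqrt{1-\epsilon}\,\vectornorm{(v_i-\bestsignal)_{\mathcal{S}_i}}_2 + \sqrt{\epsilon}\,\vectornorm{v_i}_2$ inside the cross term via $\sqrt{p+q}\le\sqrt p+\sqrt q$. Writing $a := \vectornorm{(v_i - \bestsignal)_{\mathcal{S}_i}}_2$, this collects the right-hand side into
\begin{align}
\vectornorm{\gamma_i - \bestsignal}_2^2 &\le \big[(1-\epsilon) + 2\sqrt{1-\epsilon}\big]a^2 + 2\sqrt{\epsilon}\,a\,\vectornorm{v_i}_2 \nonumber \\ &\quad + \epsilon\vectornorm{v_i}_2^2 + \vectornorm{v_i - \bestsignal}_2^2. \nonumber
\end{align}
The coefficient $[(1-\epsilon)+2\sqrt{1-\epsilon}]$ already matches the $\delta_{3\sparsity}^2$ coefficient in the target radicand, so what remains is to convert $a$ and $\vectornorm{v_i}_2$ into $\vectornorm{v_i - \bestsignal}_2$ plus error terms.

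To produce the $\delta_{3\sparsity}$ factor I would exploit the optimality of the Step-2 least-squares solution: its residual is (approximately) orthogonal to the columns indexed by $\mathcal{S}_i$, which together with the RIP near-orthogonality over the $3\sparsity$-sized support $\mathcal{S}_i\cup\mathrm{supp}(\bestsignal)$ yields a bound of the form $a \le \delta_{3\sparsity}\vectornorm{v_i - \bestsignal}_2 + (\text{const})\vectornorm{\noise}_2$. Combined with the trivial $\vectornorm{v_i}_2 \le \vectornorm{v_i - \bestsignal}_2 + \vectornorm{\bestsignal}_2$, substituting these two estimates makes the coefficient of $\vectornorm{v_i - \bestsignal}_2^2$ equal to exactly $1 + \big[(1-\epsilon)+2\sqrt{1-\epsilon}\big]\delta_{3\sparsity}^2 + 2\delta_{3\sparsity}\sqrt{\epsilon} + \epsilon$, while every remaining monomial is a product of $\vectornorm{\noise}_2$ and/or $\vectornorm{\bestsignal}_2$. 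Taking square roots with subadditivity (or by completing the square) gives the stated linear bound \eqref{eq:74}; the mixed monomial $\vectornorm{\bestsignal}_2\vectornorm{\noise}_2$ --- arising from the $\epsilon$-dependent terms $\epsilon\vectornorm{v_i}_2^2$ and $2\sqrt{\epsilon}\,a\,\vectornorm{v_i}_2$ that couple the signal energy with the noise --- is precisely what surfaces as the $D_3\sqrt{\vectornorm{\bestsignal}_2\vectornorm{\noise}_2}$ term.

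The main obstacle is the RIP estimate $a \le \delta_{3\sparsity}\vectornorm{v_i - \bestsignal}_2 + (\text{const})\vectornorm{\noise}_2$: unlike in CoSaMP or Subspace Pursuit, the Step-2 minimizer is an $\ell_1$-\emph{constrained} least-squares solution, so its first-order optimality conditions carry a subgradient/multiplier term from the (possibly active) $\ell_1$ ball. I expect the delicate part to be showing that this shrinkage term does not degrade the restricted-orthogonality estimate --- i.e., that near-orthogonality of the residual to $\mathcal{S}_i$ survives the $\ell_1$ constraint --- which should be possible because both $v_i$ and $\bestsignal$ are $\ell_1$-feasible. A secondary bookkeeping difficulty is tracking the cross-products in $\vectornorm{\noise}_2$ and $\vectornorm{\bestsignal}_2$ carefully enough to isolate the three constants $D_1, D_2, D_3$ when passing from the squared inequality to its square root.
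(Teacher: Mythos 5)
Your proof is correct and matches the paper's own argument essentially step for step: the same quadratic expansion of $\vectornorm{\gamma_i-\bestsignal}_2^2$ around $v_i$, the same observation that $\gamma_i - v_i$ is supported on $\mathcal{S}_i$, the same use of Lemma \ref{lemma:comb_selection} together with $\sqrt{p+q}\le\sqrt{p}+\sqrt{q}$, the same key estimate $\vectornorm{(v_i-\bestsignal)_{\mathcal{S}_i}}_2 \le \delta_{3\sparsity}\vectornorm{v_i-\bestsignal}_2 + \sqrt{1+\delta_{2\sparsity}}\vectornorm{\noise}_2$ (the paper's (\ref{eq:13})), and the same final bookkeeping via $\vectornorm{v_i}_2 \le \vectornorm{v_i-\bestsignal}_2 + \vectornorm{\bestsignal}_2$. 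The issue you flagged as the main obstacle is resolved exactly as you predicted---the variational inequality at the $\ell_1$-constrained minimizer $v_i$ is applied with the feasible test point $(\bestsignal)_{\mathcal{S}_i}$, so no multiplier term survives---and your one cosmetic deviation (bounding the cross term by plain Cauchy--Schwarz and then reusing that estimate, where the paper instead re-runs the optimality argument to obtain its (\ref{eq:24})) yields the identical bound.
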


Finally, the proof of Theorem \ref{thm: iteration invariant} follows by concatenating Corollary \ref{corollary:debias} with Lemmas \ref{lemma:active_set_exp}, \ref{lemma:greedy}, and \ref{lemma:noname}, and setting $\epsilon = 0$. 

\section{Experiments}\label{sec: experiments}
In the following experiments, we compare algorithms from the following list: $ (i) $ the LASSO algorithm \cite{tibshirani96regression}, $ (ii) $ the Basis Pursuit DeNoising (BPDN) \cite{Chen98atomicdecomposition}, $ (iii) $ the sparse-\clash algorithm, where $\constraint $ is the index set of $\sparsity$-sparse signals, $ (iv) $ the model-\clash algorithm\footnote{\clash codes are available for MATLAB at {\it http://lions.epfl.ch/CLASH.}}, which explicitly carries $\constraint$, and $ (v) $  Subspace Pursuit (SP) algorithm \cite{SP}, as integrated with the model-CS approach.  We emphasize here that when $\lambda\rightarrow\infty$ in \eqref{eq: cLASSO}, \clash must converge to the model-based SP solution.

The LASSO algorithm finds a solution to the problem defined in \eqref{eq: LASSO}, where we use a Nesterov accelerated projected gradient algorithm. The BPDN algorithm in turn solves the following optimization problem: \vspace{-0.2cm}
\begin{align}{\label{eq: BPDN}}
\widehat{\signal}_{\text{BPDN}} = \arg\min \left\{\left\|\signal \right\|_1:~~ \vectornorm{\sensing \signal - \obs}_2 \leq \sigma \right\},\vspace{-0.2cm}
\end{align} where $ \sigma $ represents prior knowledge on the energy of the additive noise term. To solve (\ref{eq: BPDN}), we use the spectral projected gradient method SPGL1 algorithm \cite{BergFriedlander2008}.
 
In the experiments below, the nonzero coefficients of $ \bestsignal $ are generated iid according to the standard normal distribution with $\vectornorm{\bestsignal}_2 = 1$.  The BPDN algorithm is given the true $\sigma$ values. While \clash is given the true value of $\sparsity$ for the experiments below, additional experiments (not shown) shows that our phase transition heuristics is quite good and the mismatch is graceful as indicated in Remark 1. All the algorithms use a high precision stopping tolerance $\eta=10^{-5}$.

\begin{figure*}[!t]
\centering
\begin{tabular}{cc}
\centerline{\subfigure[]{\includegraphics[width = 0.44\textwidth]{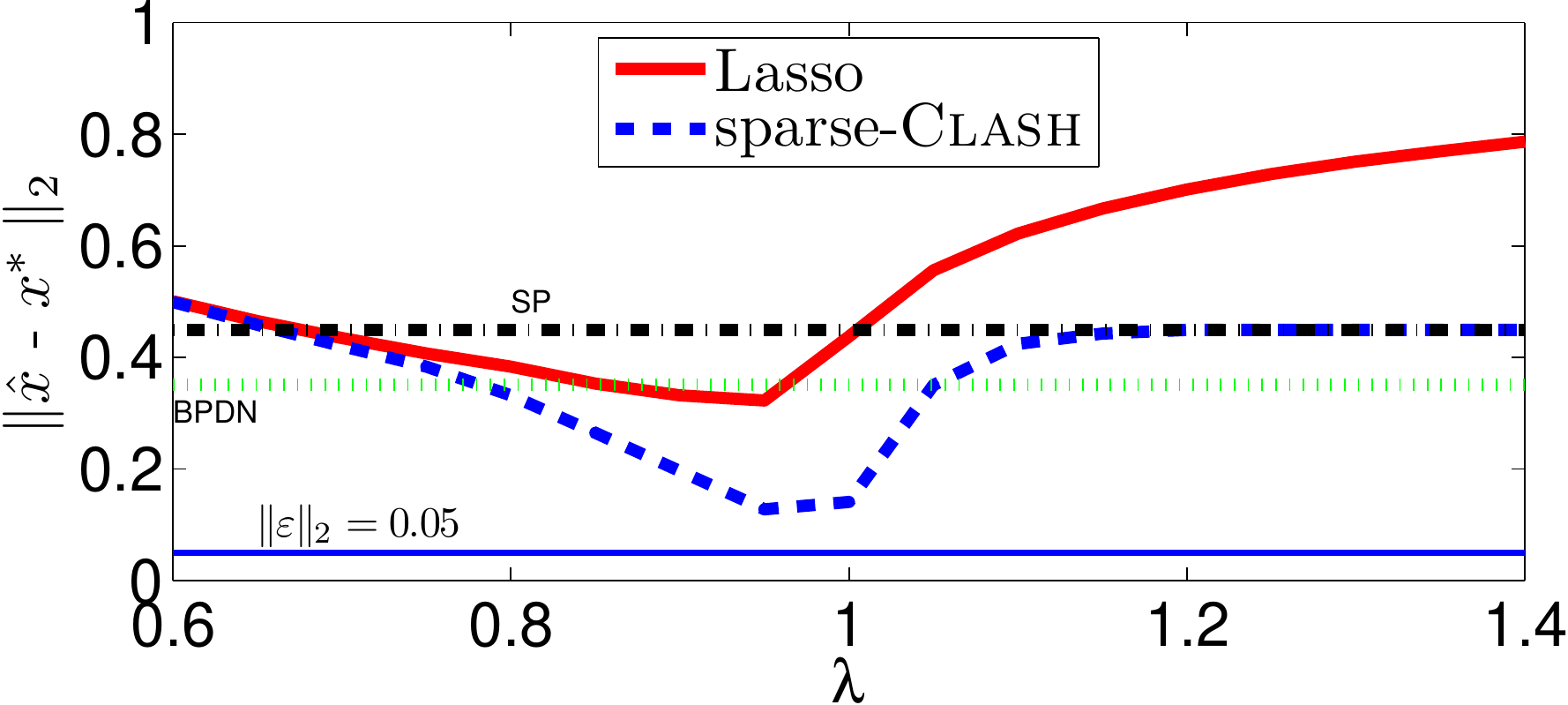}} \label{fig:1a}
\hfil
\subfigure[]{\includegraphics[width = 0.44\textwidth]{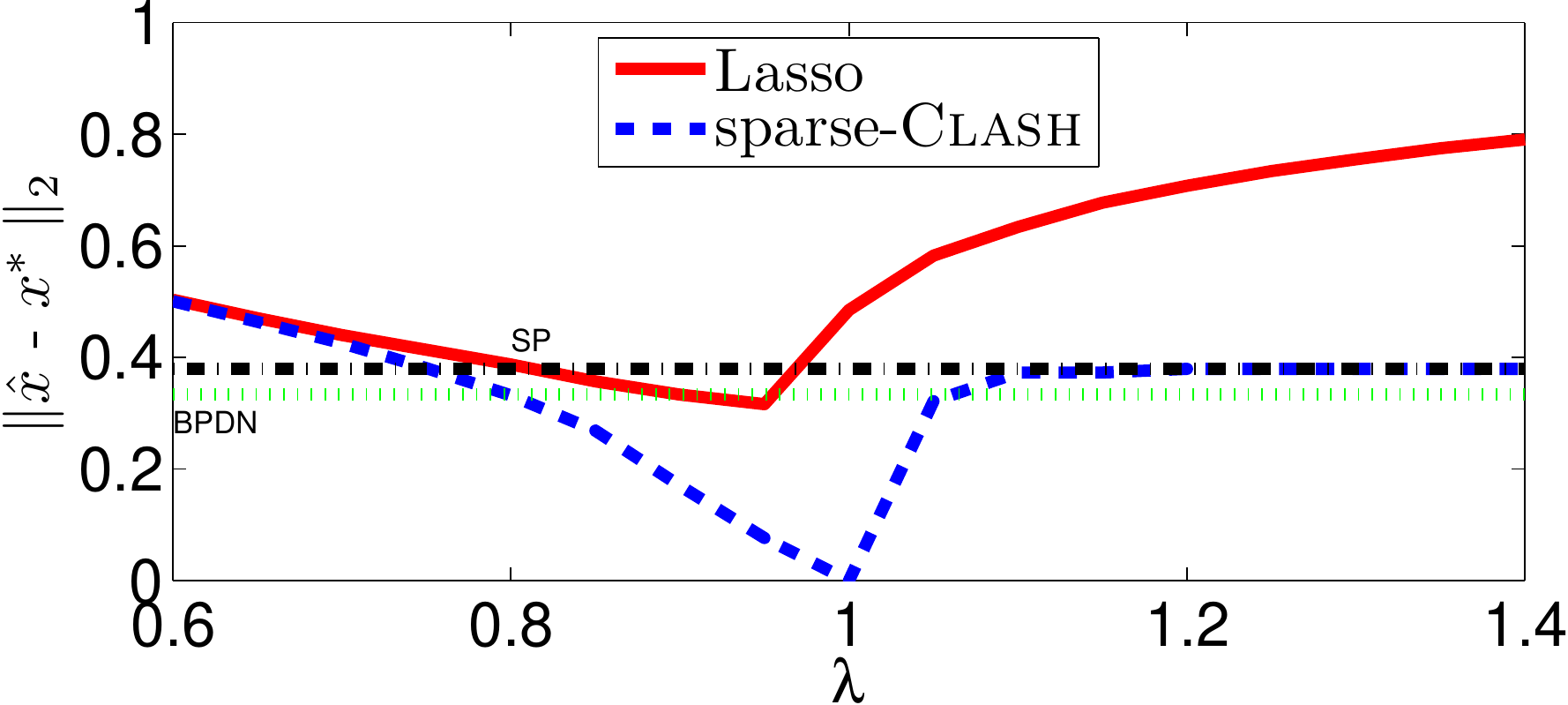}} \label{fig:1b} } \\
\centerline{\subfigure[]{\includegraphics[width = 0.44\textwidth]{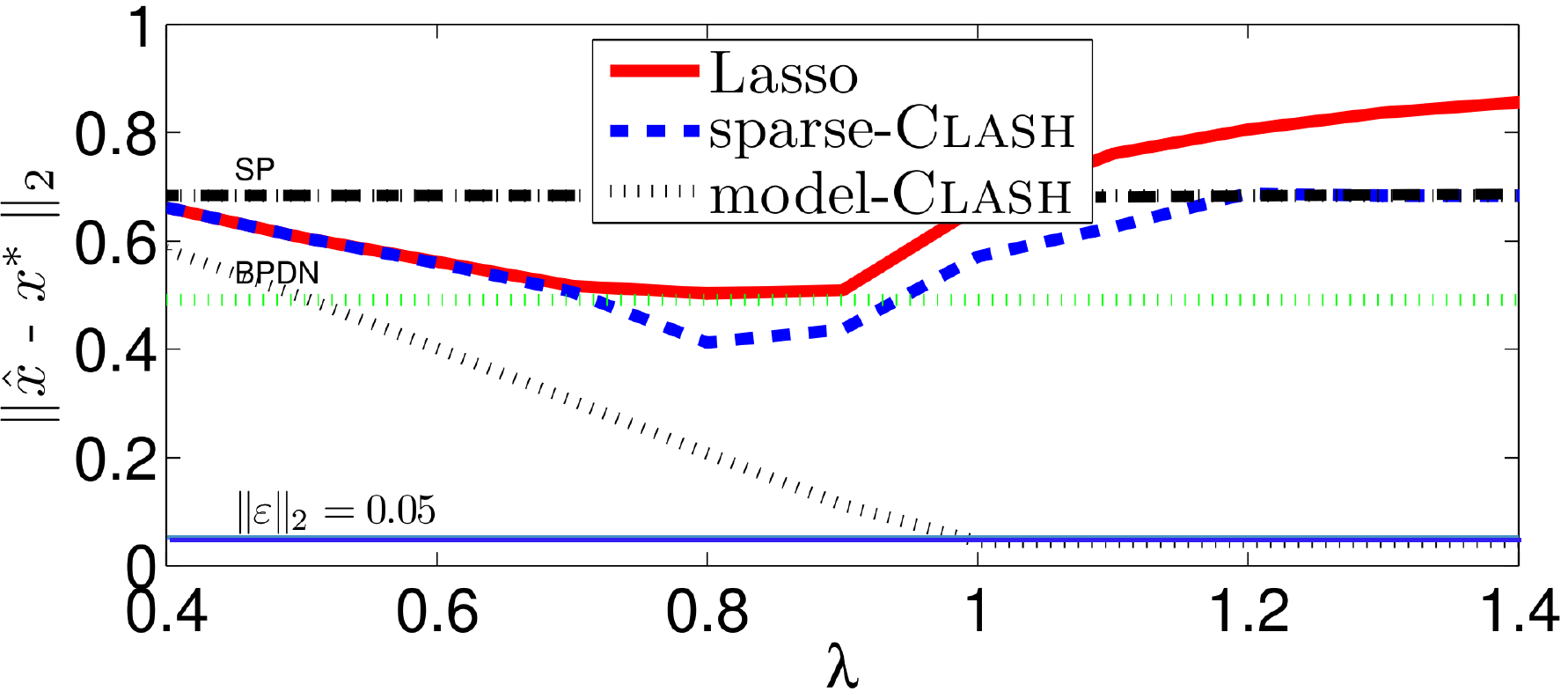}} \label{fig:1c}
\hfil
\subfigure[]{\includegraphics[width = 0.44\textwidth]{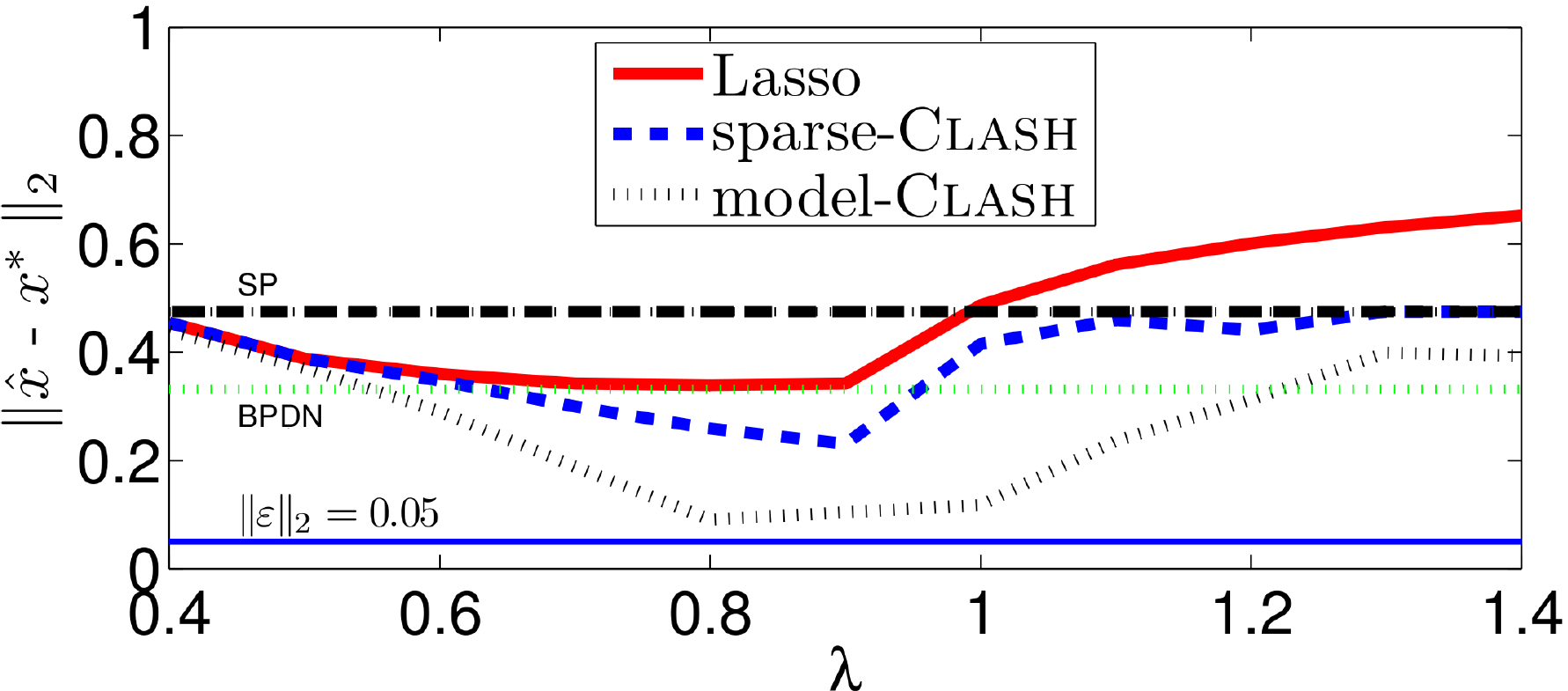}}\label{fig:1d}}
\end{tabular}
\caption{\small\sl
Median values of signal error $\vectornorm{\hat{\signal} - \bestsignal}_2$. Top row: simple sparsity model under noisy $ \vectornorm{\noise}_2 = 0.05 $ (left column) and noiseless $ \vectornorm{\noise}_2 = 0 $ (right column) settings. Bottom row: the $(\sparsity, C)$-clustered sparsity model (left column) and the TU model (right column).  }
\end{figure*}

\textbf{Experiment 1: Improving simple sparse recovery.}
In this experiment, we generate random realizations of the model $ \obs = \sensing \bestsignal + \noise $ for $ \dimension = 800 $. Here, $\sensing$ is a dense random matrix whose entries are iid Gaussian with zero mean and variance $1/\numsam$. We consider two distinct generative model settings: $ (i) $ with additive Gaussian white noise with $ \vectornorm{\noise}_2 = 0.05 $, $\numsam = 240 $ and $\sparsity = 89 $, and $(ii)$ the noiseless model ($\vectornorm{\noise}_2 = 0 $), $\numsam = 250 $ and sparsity parameter $\sparsity = 93$. For this experiment, we perform 500 Monte Carlo model realizations.

We sweep $\lambda$ and illustrate the recovery performance of \clash (\ref{eq: cLASSO}). Figures 1(a)-(b) illustrate that the combination of hard thresholding with norm constraints can {\it improve} the signal recovery performance significantly over convex-only and hard thresholding-only methods---both in noisy and noiseless problem settings. For $\vectornorm{\noise} = 0 $, \clash perfectly recovers the signal when $\lambda $ is close to the true value. When $\lambda\ll\vectornorm{\bestsignal}_1$, the performance degrades due to the large norm mismatch.

\textbf{Experiment 2: Improving structured sparse recovery} We consider two signal CSMs: in the first model, we assume $ \sparsity $-sparse signals that admit clustered sparsity with coefficients in $ C $-contiguous blocks on an undirected, acyclic chain graph \cite{cevher2009recovery}. Without loss of generality, we use $ C = 5 $ (Figure 1(c)). The second model corresponds to a TU system where we partition the $ \sparsity $-sparse signals into uniform blocks and force sparsity constraints on individual blocks; in this case, we solve the set optimization problem optimally via linear programming relaxation (Figure 1(d)).  Here, the noise energy level satisfies $ \vectornorm{\noise}_2 = 0.05$, and $ \dimension = 500 $, $ \numsam = 125 $, and $ \sparsity = 50$. In both cases, we conduct 100 Monte Carlo iterations and perform sparse estimation for a range of $\lambda$ values. 

In Figure 1(c), we observe that clustered sparsity structure provides a distinct advantage in reconstruction compared to LASSO formulation and the sparse-\clash algorithm. Furthermore, note that when $\lambda$ is large, norm constraints have {\it no effect} and the model-\clash provides essentially the same results as the model-CS approach \cite{modelCS}. On the other hand, the sparse-\clash improves significantly beyond the LASSO solution thanks to the $\ell_1$-norm constraint.

In Figure 1(d) however, the situation is radically changed: while the TU constraints enhance the reconstruction of model-CS approach over simple sparse recovery, the improvement becomes quite large as the $\ell_1$-norm constraint kicks in. We also observe the improvement in sparse-\clash but it is not as accentuated as the model-\clash.

\section{Conclusions}\label{sec: conc}
\clash establishes a regression framework where efficient algorithms from combinatorial and convex optimization can interface for interpretable and model-based sparse solutions. Our experiments demonstrate that while the model-based combinatorial selection by itself can greatly improve sparse recovery over the approaches based on uniform sparsity alone, the shrinkage operations due to the $\ell_1$-constraint has an undeniable, positive impact on the learning performance. Understanding the tradeoffs between the complexity of approximation and the recovery guarantees of \clash in this setting is a promising theoretical as well as practical direction.

\appendix
\subsection{Proof of Theorem 1}

A well-known lemma used in the convergence guarantee proof of \clash is defined next. The proof is omitted.

\begin{lemma}[Optimality condition] \textit{Let $ \Theta \subseteq \mathbb{R}^\dimension $ be a convex set and $ f: \Theta \rightarrow \mathbb{R} $ be a smooth objective function defined over $ \Theta $. Let $ \psi^\ast \in \Theta $ be a local minimum of the objective function $ f $ over the set $ \Theta $. Then}
\begin{align}
\langle \nabla f(\psi^\ast), \psi - \psi^\ast \rangle \geq 0, \;\; \forall \psi \in \Theta, 
\end{align} \textit{for all convex sets $ \Theta $.}
\end{lemma}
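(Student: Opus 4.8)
The plan is to exploit the convexity of $\Theta$ in order to collapse the multivariate optimality claim into a one-dimensional argument along the line segment joining $\psi^\ast$ to an arbitrary competitor. First I would fix any $\psi \in \Theta$ and use convexity to observe that the whole segment $\psi^\ast + t(\psi - \psi^\ast) = (1-t)\psi^\ast + t\psi$ stays inside $\Theta$ for every $t \in [0,1]$. This guarantees that every such feasible perturbation remains in the region where $f$ is defined and where $\psi^\ast$ is known to be locally optimal.

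Next I would introduce the scalar restriction $g(t) \triangleq f(\psi^\ast + t(\psi - \psi^\ast))$ on $[0,1]$. Since $\psi^\ast$ is a local minimum of $f$ over $\Theta$ and the segment lies entirely in $\Theta$, the endpoint $t = 0$ is a local minimum of $g$ from the right; hence $g(t) \geq g(0)$ for all sufficiently small $t > 0$, which rearranges to the difference-quotient inequality $\frac{g(t) - g(0)}{t} \geq 0$.

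Finally I would pass to the limit $t \to 0^+$ and invoke the chain rule, using smoothness of $f$, to identify the one-sided derivative as $g'(0^+) = \langle \nabla f(\psi^\ast), \psi - \psi^\ast \rangle$. The nonnegativity of the difference quotients is preserved in the limit, so $\langle \nabla f(\psi^\ast), \psi - \psi^\ast \rangle \geq 0$; because $\psi \in \Theta$ was arbitrary, the inequality holds for every feasible $\psi$, and the argument is valid for any convex $\Theta$.

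The derivation is essentially routine, so I do not expect a genuine obstacle. The only point deserving explicit care is the one-sided nature of the conclusion: since $\psi^\ast$ need not be interior to $\Theta$, one cannot upgrade the result to $\nabla f(\psi^\ast) = 0$, and one must take the directional derivative as a right-hand limit. Noting that smoothness of $f$ on $\Theta$ justifies the chain rule even at a boundary point is the single subtlety worth stating explicitly.
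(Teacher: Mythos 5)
Your proof is correct: fixing $\psi\in\Theta$, restricting $f$ to the segment $(1-t)\psi^\ast+t\psi$ (which lies in $\Theta$ by convexity), using local optimality to get nonnegative one-sided difference quotients, and identifying the limit with $\langle \nabla f(\psi^\ast), \psi - \psi^\ast \rangle$ via the chain rule is exactly the standard first-order optimality argument. The paper omits the proof of this lemma entirely, calling it well-known, and your argument is the canonical one it implicitly relies on, including the right caveat that at a boundary point $\psi^\ast$ one only gets the variational inequality and not $\nabla f(\psi^\ast)=0$.
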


In the derivation of Theorem 1, we assume $ \bestsignal \in \mathbb{R}^{\dimension} $ is the loading vector, $ \obs \in \mathbb{R}^{\numsam} $ is the set of observations, $ \sensing \in \mathbb{R}^{\numsam \times \dimension} $ is the regression matrix and $ \noise = \obs - \sensing \bestsignal $ represents the additive noise term. For clarity reasons, we present the proof of Theorem 1 as a collection of lemmas to help readability.

\begin{lemma}[Active set expansion] \textit{The support set $ \mathcal{S}_i $, where $ |\mathcal{S}_i| \leq 2\sparsity $, identifies a subspace in $ \constrainttwo $ such that:}
\begin{align}
\vectornorm{(\signal_{i} - \bestsignal)_{\mathcal{S}_i^c}}_2 \leq (\delta_{3\sparsity} + \delta_{2\sparsity} + \sqrt{\epsilon}(1+\delta_{2\sparsity}))\vectornorm{\signal_i - \bestsignal}_2 + \big(\sqrt{2(1+\delta_{3\sparsity})} + \sqrt{\epsilon(1+\delta_{2\sparsity}})\big)\vectornorm{\noise}_2. \label{eq:99a}
\end{align} 
\end{lemma}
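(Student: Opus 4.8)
The plan is to follow the CoSaMP / Subspace-Pursuit template, adapted to the $\epsilon$-approximate combinatorial projection, and the first move is a reduction. Since Step 1 sets $\mathcal{S}_i = \text{supp}(\projection(\nabla_{\mathcal{X}_i^c}f(\signal_i))) \cup \mathcal{X}_i$ with $\mathcal{X}_i = \text{supp}(\signal_i)$, the complement $\mathcal{S}_i^c$ sits inside $\mathcal{X}_i^c$, so $\signal_i$ vanishes there and $(\signal_i - \bestsignal)_{\mathcal{S}_i^c} = -\bestsignal_{\mathcal{S}_i^c}$. Writing $\Lambda^\ast = \text{supp}(\bestsignal)$, the quantity to bound collapses to $\vectornorm{\bestsignal_{\mathcal{R}_1}}_2$, where $\mathcal{R}_1 = \Lambda^\ast \setminus \mathcal{S}_i$ is the slice of the true support that Step 1 fails to capture. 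I would carry the proxy $p = \sensing^T(\obs - \sensing\signal_i) = \sensing^T\sensing(\bestsignal - \signal_i) + \sensing^T\noise$ throughout, because the hard-thresholding characterization of $\projection$ (Lemma 1) selects supports by the modular energy $\vectornorm{p_{\mathcal{S}}}_2^2$, and the overall $-2$ factor in $\nabla f$ is irrelevant to that selection.

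The combinatorial guarantee enters through a single comparison. Because the CSM is downward compatible, $\mathcal{R} := \Lambda^\ast \setminus \mathcal{X}_i \in \constraint$, so $\text{PMAP}_\epsilon$ applied to the selected set $\mathcal{T}_i := \text{supp}(\projection(\nabla_{\mathcal{X}_i^c}f(\signal_i)))$ gives $\vectornorm{p_{\mathcal{T}_i}}_2^2 \ge (1-\epsilon)\vectornorm{p_{\mathcal{R}}}_2^2$. Splitting both sides along the overlap $\mathcal{R}_0 = \mathcal{R}\cap\mathcal{T}_i$ and cancelling the common $\vectornorm{p_{\mathcal{R}_0}}_2^2$ yields $\vectornorm{p_{\mathcal{R}_1}}_2^2 - \vectornorm{p_{\mathcal{T}_i\setminus\mathcal{R}}}_2^2 \le \epsilon \vectornorm{p_{\mathcal{R}}}_2^2$, hence the clean telescoped inequality
\[
\vectornorm{p_{\mathcal{R}_1}}_2 \le \vectornorm{p_{\mathcal{T}_i\setminus\mathcal{R}}}_2 + \sqrt{\epsilon}\,\vectornorm{p_{\mathcal{R}}}_2 .
\]
Keeping the comparison at the level of squared energies until the very last step is what avoids an unwanted $1/(1-\epsilon)$ factor.

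The remaining work is to push these proxy inequalities into the signal domain via the RIP. For the left-hand side I would write $p_{\mathcal{R}_1} = (\signal_i - \bestsignal)_{\mathcal{R}_1} + [(\sensing_T^T\sensing_T - \mathbb{I})(\signal_i - \bestsignal)]_{\mathcal{R}_1} + \sensing_{\mathcal{R}_1}^T\noise$ with $T = \Lambda^\ast \cup \mathcal{X}_i$ (so $|T| \le 2\sparsity$), whereby the diagonal and near-orthogonal terms fold into the lower bound $\vectornorm{p_{\mathcal{R}_1}}_2 \ge \vectornorm{(\signal_i-\bestsignal)_{\mathcal{R}_1}}_2 - \delta_{2\sparsity}\vectornorm{\signal_i-\bestsignal}_2 - \vectornorm{\sensing_{\mathcal{R}_1}^T\noise}_2$. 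For $\vectornorm{p_{\mathcal{T}_i\setminus\mathcal{R}}}_2$ the index set $\mathcal{T}_i\setminus\mathcal{R} = \mathcal{T}_i\setminus\Lambda^\ast$ is disjoint from $\text{supp}(\signal_i - \bestsignal)$ and has size $\le \sparsity$, so pure near-orthogonality over at most $3\sparsity$ indices gives $\vectornorm{p_{\mathcal{T}_i\setminus\mathcal{R}}}_2 \le \delta_{3\sparsity}\vectornorm{\signal_i-\bestsignal}_2 + \vectornorm{\sensing_{\mathcal{T}_i\setminus\mathcal{R}}^T\noise}_2$. The two noise terms, living on disjoint sets, merge through Cauchy--Schwarz into $\sqrt{2}\,\vectornorm{\sensing_{\mathcal{R}_1\cup(\mathcal{T}_i\setminus\mathcal{R})}^T\noise}_2 \le \sqrt{2(1+\delta_{3\sparsity})}\vectornorm{\noise}_2$ (bounding the size-$2\sparsity$ set's isometry crudely by $\delta_{3\sparsity}$). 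Finally the slack is controlled by $\vectornorm{p_{\mathcal{R}}}_2 \le \sqrt{1+\delta_{2\sparsity}}\,\vectornorm{\obs-\sensing\signal_i}_2 \le \sqrt{1+\delta_{2\sparsity}}\big(\sqrt{1+\delta_{2\sparsity}}\,\vectornorm{\signal_i-\bestsignal}_2 + \vectornorm{\noise}_2\big)$, which expands exactly into the advertised $\sqrt{\epsilon}(1+\delta_{2\sparsity})\vectornorm{\signal_i-\bestsignal}_2 + \sqrt{\epsilon(1+\delta_{2\sparsity})}\vectornorm{\noise}_2$. Chaining the three displays closes the argument.

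I expect the main obstacle to be the $\epsilon$-bookkeeping rather than any individual RIP estimate. One must pick the comparison set $\mathcal{R}$ (not $\mathcal{R}_0$ nor $\mathcal{S}_i$) precisely so that its proxy energy is cleanly dominated by $\vectornorm{\obs-\sensing\signal_i}_2$, and one must organize the $\text{PMAP}_\epsilon$ inequality so that the approximation slack surfaces additively as $\sqrt{\epsilon}\,\vectornorm{p_{\mathcal{R}}}_2$ with no spurious denominator. The secondary point of care is the support-size accounting that routes $\delta_{2\sparsity}$ into the lower bound and $\delta_{3\sparsity}$ into the upper bound; collapsing the noise constant with $\delta_{2\sparsity}\le\delta_{3\sparsity}$ is harmless, since the contraction threshold in Theorem \ref{thm: iteration invariant} is already stated through $\delta_{3\sparsity}$.
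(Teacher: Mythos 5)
Your proposal is correct and takes essentially the same route as the paper's proof: the $\text{PMAP}_\epsilon$ energy comparison on the proxy $\sensing^T(\obs-\sensing\signal_i)$, the cancellation of the overlap between the selected and true residual supports, and the two RIP estimates (a reverse-triangle lower bound with $\delta_{2\sparsity}$ on the missed set, a near-orthogonality upper bound with $\delta_{3\sparsity}$ on the spuriously selected set, noise terms merged into $\sqrt{2(1+\delta_{3\sparsity})}\vectornorm{\noise}_2$) are exactly the steps in the paper's argument. The only cosmetic difference is that you invoke PMAP against $\mathcal{R}=\Lambda^\ast\setminus\mathcal{X}_i$ directly rather than against $\mathcal{X}_i\cup\mathcal{X}^\ast$ with subsequent cancellation, which makes the $\sqrt{\epsilon}$ slack marginally tighter but lands on the same constants.
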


\begin{proof} 
Let $ \mathcal{X}_i \cup \mathcal{X}^\ast $ denote the union of the support sets of the current estimate $ \signal_i $ and the signal of interest $ \bestsignal $. Then, the following sequence of inequalities hold true:
\begin{align}
F(\mathcal{X}_i \cup \mathcal{X}^\ast ; \nabla f(\signal_i)) &\leq F(\mathcal{X}_i \cup \text{supp}(\mathcal{P}_{\constraint}(\nabla_{\mathcal{X}_i^c} f(\signal_i))); \nabla f(\signal_i)) \label{eq:901} \Rightarrow \\
(1-\epsilon)F(\mathcal{X}_i \cup \mathcal{X}^\ast; \nabla f(\signal_i)) &\leq (1-\epsilon) F(\mathcal{X}_i \cup \text{supp}(\mathcal{P}_{\constraint}(\nabla_{\mathcal{X}_i^c} f(\signal_i))); \nabla f(\signal_i)) \label{eq:902}
\end{align} Given that support set $ \mathcal{S}_i $ is an $ \epsilon $-approximate support set, from the definition of PMAP, (\ref{eq:902}) is further transformed into:
\begin{align}
(1-\epsilon)F(\mathcal{X}_i \cup \mathcal{X}^\ast; \nabla f(\signal_i)) &\leq F(\mathcal{S}_i ; \nabla f(\signal_i)). \label{eq:92}
\end{align} Substituting the definition of the variance reduction modular function $ F(\mathcal{S};\signal)\triangleq \|\signal\|_2^2 - \|(\signal)_\mathcal{S}-\signal\|_2^2 = \vectornorm{(\signal)_{\mathcal{S}}}_2^2$, we get:
\begin{align}
(1-\epsilon)\vectornormmed{\nabla_{\mathcal{X}_i \cup \mathcal{X}^\ast} f(\signal_i)}_2^2&\leq \vectornormmed{\nabla_{\mathcal{S}_i} f(\signal_i)}_2^2 \Rightarrow \\
(1-\epsilon)\vectornormbig{\Big(\sensing^\ast (\obs - \sensing \signal_i)\Big)_{\mathcal{X}_i \cup \mathcal{X}^\ast}}_2^2&\leq \vectornormbig{\Big(\sensing^\ast (\obs - \sensing \signal_i)\Big)_{\mathcal{S}_i}}_2^2 \Rightarrow \\
\vectornormbig{\Big(\sensing^\ast (\obs - \sensing \signal_i)\Big)_{\mathcal{X}_i \cup \mathcal{X}^\ast}}_2^2 &\leq \vectornormbig{\Big(\sensing^\ast (\obs - \sensing \signal_i)\Big)_{\mathcal{S}_i }}_2^2 + \epsilon \vectornormbig{\Big(\sensing^\ast (\obs - \sensing \signal_i)\Big)_{\mathcal{X}_i \cup \mathcal{X}^\ast}}_2^2. \label{eq:93} 
\end{align} Using the subadditivity property of the square root function and excluding the common distribution $ \big(\sensing^\ast (\obs - \sensing \signal_i)\big)_{(\mathcal{X}_i \cup \mathcal{X}^\ast) \cap \mathcal{S}_i} $, we have:

\begin{align}
\vectornormbig{\Big(\sensing^\ast (\obs - \sensing \signal_i)\Big)_{(\mathcal{X}_i \cup \mathcal{X}^\ast)\setminus \mathcal{S}_i }}_2 &\leq \vectornormbig{\Big(\sensing^\ast (\obs - \sensing \signal_i)\Big)_{\mathcal{S}_i \setminus (\mathcal{X}_i \cup \mathcal{X}^\ast)}}_2 + \sqrt{\epsilon} \vectornormbig{\Big(\sensing^\ast (\obs - \sensing \signal_i)\Big)_{\mathcal{X}_i \cup \mathcal{X}^\ast}}_2 \label{eq:94} \\
 &\stackrel{(i)}{\leq} \vectornormbig{\Big(\sensing^\ast \sensing (\bestsignal - \signal_i)\Big)_{\mathcal{S}_i \setminus (\mathcal{X}_i \cup \mathcal{X}^\ast)}}_2 + \vectornormbig{(\sensing^\ast \noise)_{\mathcal{S}_i \setminus (\mathcal{X}_i \cup \mathcal{X}^\ast)}}_2 \nonumber \\ &+ \sqrt{\epsilon} \vectornormbig{\Big(\sensing^\ast \sensing (\bestsignal - \signal_i)\Big)_{\mathcal{X}_i \cup \mathcal{X}^\ast}}_2 + \sqrt{\epsilon} \vectornormbig{(\sensing^\ast \noise)_{\mathcal{X}_i \cup \mathcal{X}^\ast}}_2 \label{eq:941} \\
 &\stackrel{(ii)}{=} \vectornormbig{\Big((\sensing^\ast \sensing - \mathbb{I})(\bestsignal - \signal_i)\Big)_{\mathcal{S}_i \setminus (\mathcal{X}_i \cup \mathcal{X}^\ast)}}_2 + \vectornormbig{(\sensing^\ast \noise)_{\mathcal{S}_i \setminus (\mathcal{X}_i \cup \mathcal{X}^\ast)}}_2 \nonumber \\ &+ \sqrt{\epsilon} \vectornormbig{\Big(\sensing^\ast \sensing (\bestsignal - \signal_i)\Big)_{\mathcal{X}_i \cup \mathcal{X}^\ast}}_2 + \sqrt{\epsilon} \vectornormbig{(\sensing^\ast \noise)_{\mathcal{X}_i \cup \mathcal{X}^\ast}}_2 \label{eq:942} \\
 &\stackrel{(iii)}{\leq} (\delta_{3\sparsity} + \sqrt{\epsilon}(1+\delta_{2\sparsity}))\vectornorm{\signal_i - \bestsignal}_2 + \vectornormbig{(\sensing^\ast \noise)_{\mathcal{S}_i \setminus (\mathcal{X}_i \cup \mathcal{X}^\ast)}}_2 \nonumber \\ &+ \sqrt{\epsilon} \vectornormbig{(\sensing^\ast \noise)_{\mathcal{X}_i \cup \mathcal{X}^\ast}}_2. \label{eq:95}
\end{align} where $ (i) $ is obtained by applying the triangle inequality, $ (ii) $ holds since $ (\bestsignal - \signal_i)_{\mathcal{S}_i \setminus (\mathcal{X}_i \cup \mathcal{X}^\ast)} = 0 $ and $ (iii) $ is due to Cauchy-Swartz inequality and isometry constant definition.

In addition, we can obtain a lower bound for $ \vectornorm{\big(\sensing^\ast (\obs - \sensing \signal_i)\big)_{(\mathcal{X}_i \cup \mathcal{X}^\ast)\setminus \mathcal{S}_i }}_2 $:
\begin{align}
\vectornormbig{\Big(\sensing^\ast (\obs - \sensing \signal_i)\Big)_{(\mathcal{X}_i \cup \mathcal{X}^\ast)\setminus \mathcal{S}_i }}_2 &= \vectornormbig{\Big(\sensing^\ast \sensing (\bestsignal - \signal_i)\Big)_{(\mathcal{X}_i \cup \mathcal{X}^\ast)\setminus \mathcal{S}_i } + (\sensing^\ast \noise)_{(\mathcal{X}_i \cup \mathcal{X}^\ast)\setminus \mathcal{S}_i }}_2 \label{eq:96} \\
&= \vectornormbig{\Big(\sensing^\ast \sensing (\bestsignal - \signal_i)\Big)_{(\mathcal{X}_i \cup \mathcal{X}^\ast)\setminus \mathcal{S}_i } + (\bestsignal - \signal_i)_{(\mathcal{X}_i \cup \mathcal{X}^\ast)\setminus \mathcal{S}_i } \nonumber \\ &- (\bestsignal - \signal_i)_{(\mathcal{X}_i \cup \mathcal{X}^\ast)\setminus \mathcal{S}_i } + (\sensing^\ast \noise)_{(\mathcal{X}_i \cup \mathcal{X}^\ast)\setminus \mathcal{S}_i }}_2 \label{eq:96} \\
&\geq \vectornorm{(\bestsignal - \signal_i)_{(\mathcal{X}_i \cup \mathcal{X}^\ast)\setminus \mathcal{S}_i }}_2 - \vectornormbig{\Big((\sensing^\ast \sensing - \mathbb{I})(\bestsignal - \signal_i)\Big)_{(\mathcal{X}_i \cup \mathcal{X}^\ast)\setminus \mathcal{S}_i }}_2 \nonumber \\ &- \vectornorm{(\sensing^\ast \noise)_{(\mathcal{X}_i \cup \mathcal{X}^\ast)\setminus \mathcal{S}_i }}_2 \label{eq:97} \\
&\stackrel{(i)}{\geq} \vectornorm{(\bestsignal - \signal_i)_{(\mathcal{X}_i \cup \mathcal{X}^\ast)\setminus \mathcal{S}_i }}_2 - \delta_{2\sparsity} \vectornorm{\bestsignal - \signal_i}_2 - \vectornorm{(\sensing^\ast \noise)_{(\mathcal{X}_i \cup \mathcal{X}^\ast)\setminus \mathcal{S}_i }}_2. \label{eq:98}
\end{align} where $ (i) $ is obtained by using Cauchy-Swartz inequality and isometry constant definition.

Since $ \vectornorm{(\signal_i - \bestsignal)_{(\mathcal{X}_i \cup \mathcal{X}^\ast)\setminus \mathcal{S}_i }}_2 = \vectornorm{(\signal_i - \bestsignal)_{\mathcal{S}_i^c }}_2 $, combining (\ref{eq:95}) and (\ref{eq:98}), we get:
\begin{align}
\vectornorm{(\signal_i - \bestsignal)_{\mathcal{S}_i^c }}_2 &\leq (\delta_{3\sparsity} + \delta_{2\sparsity} + \sqrt{\epsilon}(1+\delta_{2\sparsity}))\vectornorm{\signal_i - \bestsignal}_2 + \big(\sqrt{2(1+\delta_{3\sparsity})} + \sqrt{\epsilon(1+\delta_{2\sparsity}})\big)\vectornorm{\noise}_2 \label{eq:99b}.
\end{align} as a consequence of the RIP inequality.
\end{proof}

\begin{lemma}{\label{lemma:greedy}}[Greedy descent with least absolute shrinkage] \textit{Let $ \mathcal{S}_i $ be a $ 2\sparsity $-sparse support set. Then, the least squares solution $ v_i $ given by:}
\begin{align}
v_i \leftarrow \argmin_{v: \vectornorm{v}_1 \leq \lambda, \text{supp}(v) \in \mathcal{S}_i} \vectornorm{\obs - \sensing v}_2^2, 
\end{align} \textit{satisfies:}
\begin{align}
\vectornorm{v_i - \bestsignal}_2 \leq \frac{1}{\sqrt{1-\delta_{3\sparsity}^2}} \vectornorm{(v_i - \bestsignal)_{\mathcal{S}_i^c}}_2 + \frac{\sqrt{1+\delta_{2\sparsity}}}{1-\delta_{3\sparsity}} \vectornorm{\noise}_2. \label{eq:15}
\end{align}
\end{lemma}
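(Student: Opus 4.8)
The plan is to exploit that $v_i$ is the minimizer of the smooth convex objective $f$ over the convex feasible set $\Theta = \set{v: \vectornorm{v}_1 \le \lambda,\ \text{supp}(v) \subseteq \mathcal{S}_i}$, so the optimality-condition lemma stated above applies verbatim (this is the route of \cite{foucart2010sparse}). First I would choose a convenient feasible competitor: the restriction $(\bestsignal)_{\mathcal{S}_i}$ is feasible because its support lies in $\mathcal{S}_i$ and $\vectornorm{(\bestsignal)_{\mathcal{S}_i}}_1 \le \vectornorm{\bestsignal}_1 \le \lambda$. Writing $r \triangleq v_i - \bestsignal$ and $a \triangleq r_{\mathcal{S}_i}$, and noting that both $v_i$ and $(\bestsignal)_{\mathcal{S}_i}$ are supported on $\mathcal{S}_i$ so that $(\bestsignal)_{\mathcal{S}_i} - v_i = -a$, the optimality condition $\langle \nabla f(v_i), (\bestsignal)_{\mathcal{S}_i} - v_i\rangle \ge 0$ together with $\nabla f(v_i) = 2\sensing^T(\sensing v_i - \obs)$ and $\obs = \sensing\bestsignal + \noise$ collapses to the single scalar inequality $\langle \sensing r, \sensing a\rangle \le \langle \noise, \sensing a\rangle$.

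The second step is the heart of the argument and is where the sharp constant $1/\sqrt{1-\delta_{3\sparsity}^2}$ is born. Let $T = \text{supp}(r) \subseteq \mathcal{S}_i \cup \text{supp}(\bestsignal)$, so $\abs{T} \le 3\sparsity$, and write $\sensing_T^T\sensing_T = \mathbb{I} + E$ with $\vectornormmed{E}_{2\rightarrow 2} \le \delta_{3\sparsity}$. Because $a$ agrees with $r$ on $\mathcal{S}_i$ and vanishes elsewhere, $a^T r = \vectornorm{a}_2^2$; thus expanding $\langle \sensing r, \sensing a\rangle = a^T(\mathbb{I}+E)r = \vectornorm{a}_2^2 + a^T E r$ and bounding $\langle \noise, \sensing a\rangle \le \sqrt{1+\delta_{2\sparsity}}\,\vectornorm{\noise}_2\vectornorm{a}_2$ turns the scalar inequality into $\vectornorm{a}_2^2 \le \delta_{3\sparsity}\vectornorm{a}_2\vectornorm{r}_2 + \sqrt{1+\delta_{2\sparsity}}\,\vectornorm{\noise}_2\vectornorm{a}_2$, i.e. $\vectornorm{a}_2 \le \delta_{3\sparsity}\vectornorm{r}_2 + \sqrt{1+\delta_{2\sparsity}}\,\vectornorm{\noise}_2$. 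The decisive move is to compare $a$ against the \emph{whole} error $r$ through the term $a^T E r$, rather than against $r_{\mathcal{S}_i^c}$; the naive alternative of bounding the cross term by $\delta_{3\sparsity}\vectornorm{a}_2\vectornorm{r_{\mathcal{S}_i^c}}_2$ and invoking a separate lower RIP bound $\vectornorm{\sensing a}_2^2 \ge (1-\delta_{2\sparsity})\vectornorm{a}_2^2$ yields only the strictly weaker factor $\delta_{3\sparsity}/(1-\delta_{2\sparsity})$ and does not recover the stated constant.

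Finally I would close the estimate by Pythagoras: since $a = r_{\mathcal{S}_i}$ and $r_{\mathcal{S}_i^c}$ have disjoint supports, $\vectornorm{r}_2^2 = \vectornorm{a}_2^2 + \vectornorm{r_{\mathcal{S}_i^c}}_2^2$. Substituting the bound on $\vectornorm{a}_2$ produces a quadratic inequality in $\vectornorm{r}_2$ of the form $(1-\delta_{3\sparsity}^2)\vectornorm{r}_2^2 - 2\delta_{3\sparsity}c\,\vectornorm{r}_2 - (c^2 + \vectornorm{r_{\mathcal{S}_i^c}}_2^2) \le 0$ with $c = \sqrt{1+\delta_{2\sparsity}}\,\vectornorm{\noise}_2$; solving it, observing that the $c^2$ contributions telescope inside the discriminant, and using $\sqrt{u^2+w^2}\le u+w$ gives exactly $\vectornorm{r}_2 \le \frac{1}{\sqrt{1-\delta_{3\sparsity}^2}}\vectornorm{r_{\mathcal{S}_i^c}}_2 + \frac{\sqrt{1+\delta_{2\sparsity}}}{1-\delta_{3\sparsity}}\vectornorm{\noise}_2$, which is the claim upon recalling $r_{\mathcal{S}_i^c} = (v_i - \bestsignal)_{\mathcal{S}_i^c}$. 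The main obstacle is the second step: extracting $1/\sqrt{1-\delta_{3\sparsity}^2}$ instead of a looser denominator depends critically on the $\mathbb{I}+E$ expansion over the combined $3\sparsity$-support and on carrying the inequality in terms of $\vectornorm{r}_2$ until the closing Pythagorean combination. The $\ell_1$ constraint enters only benignly, since it merely replaces the normal-equation equality by the one-sided optimality inequality, which is all the argument needs.
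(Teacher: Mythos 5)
Your proof is correct and follows essentially the same route as the paper's: the feasible competitor $(\bestsignal)_{\mathcal{S}_i}$ in the optimality condition, the RIP/Cauchy--Schwarz bound $\vectornorm{(v_i-\bestsignal)_{\mathcal{S}_i}}_2 \le \delta_{3\sparsity}\vectornorm{v_i-\bestsignal}_2 + \sqrt{1+\delta_{2\sparsity}}\vectornorm{\noise}_2$, the Pythagorean split over $\mathcal{S}_i$ and $\mathcal{S}_i^c$, and the largest-root bound on the resulting quadratic. Your packaging of the cross term via $\sensing_T^T\sensing_T = \mathbb{I}+E$ on the combined $3\sparsity$-support is just a notational variant of the paper's bound on $\abs{\langle v_i-\bestsignal, (\mathbb{I}-\sensing^\ast\sensing)(v_i-\bestsignal)_{\mathcal{S}_i}\rangle}$, and yields the identical constants.
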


\begin{proof} 
We know that $ \text{supp}(v_i) \in \mathcal{S}_i $. Starting from $ \vectornorm{v_i - \bestsignal}_2^2 $, the following holds true:
\begin{align}
\vectornorm{v_i - \bestsignal}_2^2 = \vectornorm{(v_i - \bestsignal)_{\mathcal{S}_i}}_2^2 + \vectornorm{(v_i - \bestsignal)_{\mathcal{S}_i^c}}_2^2. \label{eq:04}
\end{align}
Using the optimality condition, $ v_i $ is the minimizer of $ \vectornorm{\obs - \sensing v}_2^2 $ over the convex set $ \Theta = \lbrace v: \vectornorm{v}_1 \leq \lambda, \text{supp}(v) \in \mathcal{S}_i \rbrace $ and therefore:
\begin{align}
\langle \nabla f(v_i), (\bestsignal - v_i)_{\mathcal{S}_i} \rangle \geq 0 \Rightarrow \langle \sensing v_i - \obs, \sensing (v_i - \bestsignal)_{\mathcal{S}_i} \rangle \leq 0. \label{eq:05}
\end{align}
We calculate the following: 
\begin{align}
\vectornorm{(v_i - \bestsignal)_{\mathcal{S}_i}}_2^2 &= \langle v_i - \bestsignal, (v_i - \bestsignal)_{\mathcal{S}_i} \rangle \label{eq:06} \\
                            &\leq \langle v_i - \bestsignal, (v_i - \bestsignal)_{\mathcal{S}_i} \rangle - \langle \sensing v_i - \obs, \sensing (v_i - \bestsignal)_{\mathcal{S}_i} \rangle \label{eq:07} \\ 
                            &= \langle v_i - \bestsignal, (v_i - \bestsignal)_{\mathcal{S}_i} \rangle - \langle \sensing v_i - \sensing \bestsignal - \noise, \sensing (v_i - \bestsignal)_{\mathcal{S}_i} \rangle \label{eq:08} \\           
                            &= \langle v_i - \bestsignal, (v_i - \bestsignal)_{\mathcal{S}_i} \rangle - \langle v_i - \bestsignal, \sensing^\ast \sensing (v_i - \bestsignal)_{\mathcal{S}_i} \rangle + \langle \noise, \sensing (v_i - \bestsignal)_{\mathcal{S}_i} \rangle \label{eq:09} \\                            
                            &= \langle v_i - \bestsignal, (\mathbb{I} - \sensing^\ast \sensing)(v_i - \bestsignal)_{\mathcal{S}_i} \rangle + \langle \noise, \sensing (v_i - \bestsignal)_{\mathcal{S}_i} \rangle \label{eq:10} \\                            
                            &\leq | \langle v_i - \bestsignal, (\mathbb{I} - \sensing^\ast \sensing)(v_i - \bestsignal)_{\mathcal{S}_i} \rangle | + \langle \noise, \sensing (v_i - \bestsignal)_{\mathcal{S}_i} \rangle \label{eq:11} \\
                            &\stackrel{(i)}{\leq} \delta_{3\sparsity} \vectornorm{(v_i - \bestsignal)_{\mathcal{S}_i}}_2 \vectornorm{v_i - \bestsignal}_2 + \sqrt{1+\delta_{2\sparsity}} \vectornorm{(v_i - \bestsignal)_{\mathcal{S}_i}}_2 \vectornorm{\noise}_2,  \label{eq:12}                           
\end{align} where $ (i) $ comes from Cauchy-Swartz inequality and isometry constant definition. Simplifying the above quadratic expression, we obtain: 
\begin{align}
\vectornorm{(v_i - \bestsignal)_{\mathcal{S}_i}}_2 \leq \delta_{3\sparsity} \vectornorm{v_i - \bestsignal}_2 + \sqrt{1+\delta_{2\sparsity}} \vectornorm{\noise}_2. \label{eq:13}
\end{align}

As a consequence, (\ref{eq:04}) can be upper bounded by:
\begin{align}
\vectornorm{v_i - \bestsignal}_2^2 \leq (\delta_{3\sparsity} \vectornorm{v_i - \bestsignal}_2 + \sqrt{1+\delta_{2\sparsity}} \vectornorm{\noise}_2)^2+ \vectornorm{(v_i - \bestsignal)_{\mathcal{S}_i^c}}_2^2. \label{eq:14}
\end{align}

We form the quadratic polynomial for this inequality assuming as unknown variable the quantity $ \vectornorm{v_i - \bestsignal}_2 $. Bounding by the largest root of the resulting polynomial, we get:
\begin{align}
\vectornorm{v_i - \bestsignal}_2 \leq \frac{1}{\sqrt{1-\delta_{3\sparsity}^2}} \vectornorm{(v_i - \bestsignal)_{\mathcal{S}_i^c}}_2 + \frac{\sqrt{1+\delta_{2\sparsity}}}{1-\delta_{3\sparsity}} \vectornorm{\noise}_2. \label{eq:15a}
\end{align} 
\end{proof} 

\begin{lemma}{\label{lemma:comb_selection}}[Combinatorial selection] \textit{Let $ v_i $ be a $ 2\sparsity $-sparse proxy vector with indices in support set $ \mathcal{S}_i $, $ \constraint $ be a CSM and $ \gamma_i $ the projection of $ v_i $ under $ \constraint $. Then:}
\begin{align}
\vectornorm{\gamma_i  - v_i}_2^2 &\leq (1 - \epsilon) \vectornorm{(v_i - \bestsignal)_{\mathcal{S}_i}}_2^2 + \epsilon \vectornorm{v_i}_2^2. \label{eq:671}
\end{align}
\end{lemma}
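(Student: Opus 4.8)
The plan is to unwind the combinatorial selection step through Lemma~\ref{lemma:mod} together with the definition of PMAP$_\epsilon$, thereby reducing the entire statement to a purely set-theoretic energy inequality. First I would use Lemma~\ref{lemma:mod} to note that the $\epsilon$-approximate projection is nothing but hard thresholding: $\gamma_i = (v_i)_{\Gamma_i}$ with $\Gamma_i = \text{supp}(\gamma_i) \in \constraint$. Consequently $\gamma_i - v_i = -(v_i)_{\Gamma_i^c}$, so that $\vectornorm{\gamma_i - v_i}_2^2 = \vectornorm{(v_i)_{\Gamma_i^c}}_2^2 = \vectornorm{v_i}_2^2 - F(\Gamma_i;v_i)$, where $F(\Gamma_i;v_i) = \vectornorm{(v_i)_{\Gamma_i}}_2^2$ is the modular variance-reduction function. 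This rewrites the left-hand side as the deficit between the total energy of $v_i$ and the energy captured by the selected support.

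Next I would invoke the PMAP$_\epsilon$ guarantee. Since $\Gamma_i$ is an $\epsilon$-approximate maximizer of $F(\cdot;v_i)$ over $\constraint$, and since $\mathcal{S}^\ast := \text{supp}(\bestsignal) \in \constraint$ (because $\bestsignal$ obeys the CSM constraint), I obtain $F(\Gamma_i;v_i) \geq (1-\epsilon)\max_{\mathcal{S}\in\constraint} F(\mathcal{S};v_i) \geq (1-\epsilon)F(\mathcal{S}^\ast;v_i) = (1-\epsilon)\vectornorm{(v_i)_{\mathcal{S}^\ast}}_2^2$. Substituting gives $\vectornorm{\gamma_i - v_i}_2^2 \leq \vectornorm{v_i}_2^2 - (1-\epsilon)\vectornorm{(v_i)_{\mathcal{S}^\ast}}_2^2$. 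Collecting the $\epsilon$-terms, the claimed bound $(1-\epsilon)\vectornorm{(v_i-\bestsignal)_{\mathcal{S}_i}}_2^2 + \epsilon\vectornorm{v_i}_2^2$ is then algebraically equivalent (after dividing by $1-\epsilon$, for $\epsilon<1$) to the clean inequality $\vectornorm{v_i}_2^2 \leq \vectornorm{(v_i-\bestsignal)_{\mathcal{S}_i}}_2^2 + \vectornorm{(v_i)_{\mathcal{S}^\ast}}_2^2$, which is all that remains.

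The final step, which I expect to be the only delicate point, is establishing this last inequality, and here the two available support facts are essential: $\text{supp}(v_i) \subseteq \mathcal{S}_i$ and $\bestsignal$ vanishes off $\mathcal{S}^\ast$. I would write $\vectornorm{v_i}_2^2 = \vectornorm{(v_i)_{\mathcal{S}_i}}_2^2$ and split $\mathcal{S}_i$ into $\mathcal{S}_i \cap \mathcal{S}^\ast$ and $\mathcal{S}_i \setminus \mathcal{S}^\ast$. On the second piece $\bestsignal = 0$, so $(v_i)_{\mathcal{S}_i\setminus\mathcal{S}^\ast} = (v_i-\bestsignal)_{\mathcal{S}_i\setminus\mathcal{S}^\ast}$ and its energy is at most $\vectornorm{(v_i-\bestsignal)_{\mathcal{S}_i}}_2^2$; on the first piece the energy is at most $\vectornorm{(v_i)_{\mathcal{S}^\ast}}_2^2$ since $\mathcal{S}_i\cap\mathcal{S}^\ast \subseteq \mathcal{S}^\ast$. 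Summing yields the inequality, and hence the lemma; setting $\epsilon = 0$ recovers the exact-projection case. The obstacle is thus purely in bookkeeping: partitioning the supports so that the cross terms vanish and the $\epsilon$-weighted $\vectornorm{v_i}_2^2$ term cleanly absorbs the slack introduced by the PMAP$_\epsilon$ relaxation.
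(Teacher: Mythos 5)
Your proof is correct, and its overall strategy matches the paper's: rewrite the approximate-projection error as $\vectornorm{v_i}_2^2 - F(\Gamma_i;v_i)$, invoke the PMAP$_\epsilon$ guarantee, and compare against a feasible support derived from $\bestsignal$. The one substantive difference is the choice of comparator, and it has a real consequence. The paper introduces the optimal projection $\gamma_i^{\text{opt}}$ and bounds $\vectornorm{\gamma_i^{\text{opt}}-v_i}_2 \leq \vectornorm{(v_i-\bestsignal)_{\mathcal{S}_i}}_2$ in one line by testing the candidate $(\bestsignal)_{\mathcal{S}_i}$; the feasibility of that candidate requires $\text{supp}(\bestsignal)\cap\mathcal{S}_i \in \constraint$, which rests on the blanket downward-compatibility assumption on CSMs made in Section~\ref{sec: prelim}. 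You instead lower-bound $\max_{\mathcal{S}\in\constraint}F(\mathcal{S};v_i)$ by $F(\mathcal{S}^\ast;v_i)$ with $\mathcal{S}^\ast=\text{supp}(\bestsignal)$, which lies in $\constraint$ by hypothesis alone, and you pay for this with the explicit partition $\mathcal{S}_i = (\mathcal{S}_i\cap\mathcal{S}^\ast)\cup(\mathcal{S}_i\setminus\mathcal{S}^\ast)$ establishing $\vectornorm{v_i}_2^2 \leq \vectornorm{(v_i-\bestsignal)_{\mathcal{S}_i}}_2^2 + \vectornorm{(v_i)_{\mathcal{S}^\ast}}_2^2$. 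The paper's route is shorter; yours is more self-contained, never needs $\gamma_i^{\text{opt}}$ as an intermediate object, and frees this particular lemma from the downward-compatibility assumption. Note that both arguments rely on the same implicit fact that the $\epsilon$-approximate projection acts by hard thresholding, $\gamma_i=(v_i)_{\Gamma_i}$: you state it explicitly via Lemma~\ref{lemma:mod}, while the paper uses it silently when it equates $F(\widehat{\mathcal{S}}_\epsilon;v_i)$ with $\vectornorm{v_i}_2^2-\vectornorm{\gamma_i-v_i}_2^2$, so no gap there.
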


\begin{proof}

Let $ \gamma_i^{\text{opt}} $ denote the optimal combinatorial projection of $ v_i $ under $ \constraint $, i.e.
\begin{align}
\gamma_i^{\text{opt}} = \mathcal{P}_{\constraint}(v_i) = \argmax_{(v_i)_{\mathcal{S}}: \mathcal{S} \in \mathcal{N}, \mathcal{S} \in \constraint} F(\mathcal{S}; v_i). \label{eq:65}
\end{align}

By the definition of the non-convex projection onto CSMs, it is apparent that:
\begin{align}
\vectornorm{\gamma_i^{\text{opt}} - v_i}_2 \leq \vectornorm{(v_i - \bestsignal)_{\mathcal{S}_i}}_2, \label{eq:19}
\end{align} over $ \constraint $ since $ \gamma_i^{\text{opt}} $ is the best approximation to $ v_i $ for that particular CSM.

In the general case, this step is performed approximately and we get $ \gamma_i $ as
\begin{align}
\gamma_i = \mathcal{P}_{\constraint}^{\epsilon}(v_i),
\end{align} an $ \epsilon $-approximate projection of $ v_i $ with corresponding variance reduction $ F(\widehat{\mathcal{S}}_\epsilon; v_i) $. 
According to the definition of $ \text{PMAP}_{\epsilon} $, we calculate:
\begin{align}
F(\widehat{\mathcal{S}}_\epsilon;v_i) &\geq (1-\epsilon)\max_{\mathcal{S} \in \constraint} F(\mathcal{S};v_i) \Rightarrow \label{eq:66} \\
\vectornorm{v_i}_2^2 - \vectornorm{\gamma_i  - v_i}_2^2 &\geq (1 - \epsilon) \Big [\vectornorm{v_i}_2^2 - \vectornorm{\gamma_i^{\text{opt}} - v_i}_2^2 \Big ] \Rightarrow \label{eq:62} \\
\vectornorm{\gamma_i  - v_i}_2^2 &\leq (1 - \epsilon) \vectornorm{\gamma_i^{\text{opt}} - v_i}_2^2 + \epsilon \vectornorm{v_i}_2^2\label{eq:67} \Rightarrow \\
\vectornorm{\gamma_i  - v_i}_2^2 &\stackrel{(\ref{eq:19})}{\leq} (1 - \epsilon) \vectornorm{(v_i - \bestsignal)_{\mathcal{S}_i}}_2^2 + \epsilon \vectornorm{v_i}_2^2.
\end{align} 
\end{proof}

\begin{lemma}\label{lemma debias}[De-bias] \textit{Let $ \Gamma_i $ be the support set of a proxy vector $ \gamma_i $ where $ |\Gamma_i| \leq \sparsity $. Then, the least squares solution $ \signal_{i+1} $ given by:}
\begin{align}
\signal_{i+1} \leftarrow \argmin_{\signal: \vectornorm{\signal}_1 \leq \lambda, \text{supp}(\signal) \in \Gamma_i} \vectornorm{\obs - \sensing \signal}_2^2, \label{eq:580}
\end{align} \textit{satsifies:}
\begin{align}
\vectornorm{\signal_{i+1} - \bestsignal}_2 \leq \frac{1}{\sqrt{1 - \delta_{2\sparsity}^2}} \vectornorm{\gamma_i  - \bestsignal}_2 + \frac{\sqrt{1+\delta_\sparsity}}{1 - \delta_{2\sparsity}} \vectornorm{\noise}_2. \label{eq:58}
\end{align}
\end{lemma}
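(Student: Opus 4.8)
The plan is to mirror the argument of Lemma~\ref{lemma:greedy} almost verbatim, shifting every restricted isometry index down by one block of $\sparsity$: here the active support $\Gamma_i$ has size at most $\sparsity$ rather than $2\sparsity$, and $\bestsignal$ contributes at most another $\sparsity$, so $\signal_{i+1}-\bestsignal$ is at most $2\sparsity$-sparse. This is exactly why the $\delta_{3\sparsity},\delta_{2\sparsity}$ of the greedy-descent step get replaced by $\delta_{2\sparsity},\delta_{\sparsity}$ in the statement.

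First I would split the error energy across $\Gamma_i$ and its complement,
\[
\vectornorm{\signal_{i+1} - \bestsignal}_2^2 = \vectornorm{(\signal_{i+1} - \bestsignal)_{\Gamma_i}}_2^2 + \vectornorm{(\signal_{i+1} - \bestsignal)_{\Gamma_i^c}}_2^2 .
\]
The key observation---the only place the argument genuinely departs from Lemma~\ref{lemma:greedy}---is that both $\signal_{i+1}$ and $\gamma_i$ are supported on $\Gamma_i$ (by Steps~3--4 of Algorithm~\ref{algo: class}), so on $\Gamma_i^c$ each vanishes and the off-support error is just the tail of $\bestsignal$: $(\signal_{i+1} - \bestsignal)_{\Gamma_i^c} = -(\bestsignal)_{\Gamma_i^c} = (\gamma_i - \bestsignal)_{\Gamma_i^c}$. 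Hence $\vectornorm{(\signal_{i+1} - \bestsignal)_{\Gamma_i^c}}_2 = \vectornorm{(\gamma_i - \bestsignal)_{\Gamma_i^c}}_2 \le \vectornorm{\gamma_i - \bestsignal}_2$, which is precisely the substitution that turns the tail term into the $\vectornorm{\gamma_i - \bestsignal}_2$ appearing in the claim.

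For the on-support term I would invoke the optimality condition stated above for $\signal_{i+1}$ over the convex set $\Theta = \lbrace \signal : \vectornorm{\signal}_1 \le \lambda,\ \text{supp}(\signal)\in\Gamma_i \rbrace$, with the feasible competitor $(\bestsignal)_{\Gamma_i}$ (feasible since its support lies in $\Gamma_i$ and $\vectornorm{(\bestsignal)_{\Gamma_i}}_1 \le \vectornorm{\bestsignal}_1 \le \lambda$). This gives $\langle \sensing\signal_{i+1} - \obs,\ \sensing(\signal_{i+1}-\bestsignal)_{\Gamma_i}\rangle \le 0$. Substituting $\obs = \sensing\bestsignal + \noise$ and expanding exactly as in \eqref{eq:06}--\eqref{eq:12}, I split off an $(\mathbb{I}-\sensing^\ast\sensing)$ form---bounded by $\delta_{2\sparsity}$ by Cauchy--Schwarz and the RIP, since that quadratic form lives on a $2\sparsity$-sparse support---and a noise term bounded by $\sqrt{1+\delta_{\sparsity}}\,\vectornorm{\noise}_2$, because $(\signal_{i+1}-\bestsignal)_{\Gamma_i}$ is at most $\sparsity$-sparse. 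Dividing through by $\vectornorm{(\signal_{i+1}-\bestsignal)_{\Gamma_i}}_2$ yields $\vectornorm{(\signal_{i+1}-\bestsignal)_{\Gamma_i}}_2 \le \delta_{2\sparsity}\vectornorm{\signal_{i+1}-\bestsignal}_2 + \sqrt{1+\delta_{\sparsity}}\,\vectornorm{\noise}_2$.

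Finally I would substitute both bounds back into the splitting to get a quadratic inequality in the scalar $t = \vectornorm{\signal_{i+1}-\bestsignal}_2$ of the form $t^2 \le \big(\delta_{2\sparsity}\,t + \sqrt{1+\delta_{\sparsity}}\,\vectornorm{\noise}_2\big)^2 + \vectornorm{\gamma_i - \bestsignal}_2^2$, and bound $t$ by the largest root, reproducing the stated constants $\tfrac{1}{\sqrt{1-\delta_{2\sparsity}^2}}$ and $\tfrac{\sqrt{1+\delta_{\sparsity}}}{1-\delta_{2\sparsity}}$. Because the algebra is identical to that of Lemma~\ref{lemma:greedy}, I do not anticipate any serious obstacle; the only points requiring care are the bookkeeping of support sizes that justifies $\delta_{2\sparsity}$ versus $\delta_{\sparsity}$, and the shared-support identity that supplies the $\vectornorm{\gamma_i - \bestsignal}_2$ tail term in place of the complement term of the greedy step.
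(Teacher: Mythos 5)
Your proposal is correct and follows essentially the same route as the paper's own proof: the same split of the error over $\Gamma_i$ and $\Gamma_i^c$, the same optimality-condition argument (with the feasible competitor $(\bestsignal)_{\Gamma_i}$, which the paper uses implicitly), the same Cauchy--Schwarz/RIP bounds yielding $\delta_{2\sparsity}$ and $\sqrt{1+\delta_\sparsity}$, the same largest-root bound on the quadratic, and the same shared-support identity $\vectornorm{(\signal_{i+1}-\bestsignal)_{\Gamma_i^c}}_2 = \vectornorm{(\gamma_i-\bestsignal)_{\Gamma_i^c}}_2 \le \vectornorm{\gamma_i-\bestsignal}_2$. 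The only (immaterial) difference is that you substitute the tail bound before solving the quadratic, whereas the paper solves first and substitutes last.
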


\begin{proof} 
The proof is similar to the proof of the Greedy descent step. Starting from $ \vectornorm{\signal_{i+1} - \bestsignal}_2^2 $:
\begin{align}
\vectornorm{\signal_{i+1} - \bestsignal}_2^2 = \vectornorm{(\signal_{i+1} - \bestsignal)_{\Gamma_i}}_2^2 + \vectornorm{(\signal_{i+1} - \bestsignal)_{\Gamma_i^c}}_2^2. \label{eq:200}
\end{align}
Similarly to lemma \ref{lemma:greedy}, $ \signal_{i+1} $ is the minimizer of $ \vectornorm{\obs - \sensing \signal}_2^2 $ under support set and norm constraints and therefore:
\begin{align}
\langle \nabla f(\signal_{i+1}), (\bestsignal - \signal_{i+1})_{\Gamma_i} \rangle \geq 0 \Rightarrow \langle \sensing \signal_{i+1} - \obs, \sensing (\signal_{i+1} - \bestsignal)_{\Gamma_i} \rangle \leq 0. \label{eq:201}
\end{align}
Following the same procedure, we have: 
\begin{align}
\vectornorm{(\signal_{i+1} - \bestsignal)_{\Gamma_i}}_2^2 &= \langle \signal_{i+1} - \bestsignal, (\signal_{i+1} - \bestsignal)_{\Gamma_i} \rangle \label{eq:202} \\
                            &\leq \langle \signal_{i+1} - \bestsignal, (\signal_{i+1} - \bestsignal)_{\Gamma_i} \rangle - \langle \sensing \signal_{i+1} - \obs, \sensing (\signal_{i+1} - \bestsignal)_{\Gamma_i} \rangle \label{eq:203} \\ 
                            &= \langle \signal_{i+1} - \bestsignal, (\signal_{i+1} - \bestsignal)_{\Gamma_i} \rangle - \langle \sensing \signal_{i+1} - \sensing \bestsignal - \noise, \sensing (\signal_{i+1} - \bestsignal)_{\Gamma_i} \rangle \label{eq:204} \\
                            &= \langle \signal_{i+1} - \bestsignal, (\signal_{i+1} - \bestsignal)_{\Gamma_i} \rangle - \langle \signal_{i+1} - \bestsignal, \sensing^\ast \sensing (\signal_{i+1} - \bestsignal)_{\Gamma_i} \rangle \nonumber \\ &+ \langle \noise, \sensing (\signal_{i+1} - \bestsignal)_{\Gamma_i} \rangle \label{eq:205}  \\
                            &= \langle \signal_{i+1} - \bestsignal, (\mathbb{I} - \sensing^\ast \sensing)(\signal_{i+1} - \bestsignal)_{\Gamma_i} \rangle + \langle \noise, \sensing (\signal_{i+1} - \bestsignal)_{\Gamma_i} \rangle \label{eq:206} \\                            
                            &\leq | \langle \signal_{i+1} - \bestsignal, (\mathbb{I} - \sensing^\ast \sensing)(\signal_{i+1} - \bestsignal)_{\Gamma_i} \rangle | + \langle \noise, \sensing (\signal_{i+1} - \bestsignal)_{\Gamma_i} \rangle \label{eq:207} \\
                            &\stackrel{(i)}{\leq} \delta_{2\sparsity} \vectornorm{(\signal_{i+1} - \bestsignal)_{\Gamma_i}}_2 \vectornorm{\signal_{i+1} - \bestsignal}_2 + \sqrt{1+\delta_{\sparsity}} \vectornorm{(\signal_{i+1} - \bestsignal)_{\Gamma_i}}_2 \vectornorm{\noise}_2, \label{eq:208}                           
\end{align} where $ (i) $ is due to Cauchy-Swartz inequality and isometry constant definition. Simplifying the above quadratic expression, we obtain 
\begin{align}
\vectornorm{(\signal_{i+1} - \bestsignal)_{\Gamma_i}}_2 \leq \delta_{2\sparsity} \vectornorm{\signal_{i+1} - \bestsignal}_2 + \sqrt{1+\delta_{\sparsity}} \vectornorm{\noise}_2. \label{eq:209}
\end{align}

Thus, $ \vectornorm{\signal_{i+1} - \bestsignal}_2^2 $ in eq. (\ref{eq:200}) can be upper bounded by the quadratic expression:
\begin{align}
\vectornorm{\signal_{i+1} - \bestsignal}_2^2 \leq (\delta_{2\sparsity} \vectornorm{\signal_{i+1} - \bestsignal}_2 + \sqrt{1+\delta_{\sparsity}} \vectornorm{\noise}_2)^2+ \vectornorm{(\signal_{i+1} - \bestsignal)_{\Gamma_i^c}}_2^2. \label{eq:210}
\end{align}

As in Lemma \ref{lemma:greedy}, we form a quadratic polynomial from (\ref{eq:210}) and bound $ \vectornorm{\signal_{i+1} - \bestsignal}_2 $ by the largest root. Thus, we obtain:
\begin{align}
\vectornorm{\signal_{i+1} - \bestsignal}_2 \leq \frac{1}{\sqrt{1-\delta_{2\sparsity}^2}} \vectornorm{(\signal_{i+1} - \bestsignal)_{\Gamma_i^c}}_2 + \frac{\sqrt{1+\delta_{\sparsity}}}{1-\delta_{2\sparsity}} \vectornorm{\noise}_2. \label{eq:211}
\end{align} In addition, we observe:
\begin{align}
\vectornorm{(\signal_{i+1} - \bestsignal)_{\Gamma_i^c}}_2 = \vectornorm{(\gamma_i - \bestsignal)_{\Gamma_i^c}}_2 \leq \vectornorm{\gamma_i - \bestsignal}_2,
\end{align} and thus:
\begin{align}
\vectornorm{\signal_{i+1} - \bestsignal}_2 \leq \frac{1}{\sqrt{1 - \delta_{2\sparsity}^2}} \vectornorm{\gamma_i  - \bestsignal}_2 + \frac{\sqrt{1+\delta_\sparsity}}{1 - \delta_{2\sparsity}} \vectornorm{\noise}_2. \label{eq:212}
\end{align}
\end{proof}

\begin{lemma} \textit{Let $ v_i $ be the least squares solution of the Greedy descent step given by}
\begin{align}
v_i \leftarrow \argmin_{v: \vectornorm{v}_1 \leq \lambda, \text{supp}(v) \in \mathcal{S}_i} \vectornorm{\obs - \sensing v}_2^2,
\end{align} \textit{and  $ \gamma_i $ be a proxy vector to $ v_i $ after applying Combinatorial selection and Least absolute shrinkage steps. 
Then, $ \vectornorm{\gamma_i  - \bestsignal}_2 $ can be expressed in terms of the distance from $ v_i $ to $ \bestsignal $ as follows:}
\begin{align}
\vectornorm{\gamma_i  - \bestsignal}_2 &\leq \sqrt{1 + \big((1-\epsilon) + 2\sqrt{1-\epsilon}\big)\delta_{3\sparsity}^2 + 2\delta_{3\sparsity}\sqrt{\epsilon} + \epsilon}\cdot \vectornorm{v_i - \bestsignal}_2 \nonumber \\ &+ D_1\vectornorm{\noise}_2 + D_2 \vectornorm{\bestsignal}_2 + D_3 \sqrt{\vectornorm{\bestsignal}_2 \vectornorm{\noise}_2}, \label{eq:74}
\end{align} \textit{where $ D_1, D_2, D_3 $ are constants depending on $ \epsilon, \delta_{2\sparsity}, \delta_{3\sparsity} $.}
\end{lemma}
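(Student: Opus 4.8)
The plan is to pivot through the greedy iterate $v_i$ and exploit the support containment $\mathrm{supp}(\gamma_i)\subseteq\mathcal{S}_i$ (since $\gamma_i=\mathcal{P}_{\constraint}(v_i)$ is a hard thresholding of $v_i$, whose support already lies in $\mathcal{S}_i$). First I would write the exact expansion
\begin{equation}\nonumber
\vectornorm{\gamma_i-\bestsignal}_2^2=\vectornorm{\gamma_i-v_i}_2^2+2\langle\gamma_i-v_i,\,v_i-\bestsignal\rangle+\vectornorm{v_i-\bestsignal}_2^2 .
\end{equation}
Because $\gamma_i-v_i$ is supported on $\mathcal{S}_i$, the inner product equals $\langle\gamma_i-v_i,\,(v_i-\bestsignal)_{\mathcal{S}_i}\rangle$ and is controlled by Cauchy--Schwarz. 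I would then feed in Lemma~\ref{lemma:comb_selection} twice: directly as the bound on $\vectornorm{\gamma_i-v_i}_2^2$, and, after taking square roots via subadditivity, as $\vectornorm{\gamma_i-v_i}_2\le\sqrt{1-\epsilon}\,\vectornorm{(v_i-\bestsignal)_{\mathcal{S}_i}}_2+\sqrt{\epsilon}\,\vectornorm{v_i}_2$ to handle the cross term. Collecting terms yields a master inequality whose right-hand side involves only the scalars $Q\triangleq\vectornorm{(v_i-\bestsignal)_{\mathcal{S}_i}}_2$, $\vectornorm{v_i}_2$ and $\vectornorm{v_i-\bestsignal}_2$; in particular the structure $\big((1-\epsilon)+2\sqrt{1-\epsilon}\big)Q^2+2\sqrt{\epsilon}\,\vectornorm{v_i}_2\,Q+\epsilon\vectornorm{v_i}_2^2$ already emerges.

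Next I would eliminate $Q$ and $\vectornorm{v_i}_2$ in favour of the quantities in the statement. For $Q$ I would reuse the intermediate estimate $Q\le\delta_{3\sparsity}\vectornorm{v_i-\bestsignal}_2+\sqrt{1+\delta_{2\sparsity}}\vectornorm{\noise}_2$ proven as \eqref{eq:13} inside the proof of Lemma~\ref{lemma:greedy}, and for the $\epsilon$-weighted factor the crude triangle bound $\vectornorm{v_i}_2\le\vectornorm{v_i-\bestsignal}_2+\vectornorm{\bestsignal}_2$. Writing $t\triangleq\vectornorm{v_i-\bestsignal}_2$, $n\triangleq\vectornorm{\noise}_2$ and $b\triangleq\vectornorm{\bestsignal}_2$ and expanding, everything becomes a quadratic form in the nonnegative scalars $t,n,b$. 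A short check shows that the coefficient of $t^2$ is exactly $1+\big((1-\epsilon)+2\sqrt{1-\epsilon}\big)\delta_{3\sparsity}^2+2\delta_{3\sparsity}\sqrt{\epsilon}+\epsilon$, which is precisely the quantity under the root in \eqref{eq:74} and reduces to $1+3\delta_{3\sparsity}^2$ at $\epsilon=0$ (consistent with $\rho$ in Theorem~\ref{thm: iteration invariant}); the remaining monomials are $tn,\,n^2,\,tb,\,b^2$ and the genuine mismatch cross term $nb$, all with nonnegative coefficients depending on $\epsilon,\delta_{2\sparsity},\delta_{3\sparsity}$.

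The final—and most delicate—step is to pass from this quadratic form to the affine bound \eqref{eq:74} while keeping the leading coefficient exactly $\sqrt{C_1}$, where $C_1$ is the $t^2$ coefficient above. I would do this by dominating the quadratic by the perfect square $\big(\sqrt{C_1}\,t+D_1 n+D_2 b+D_3\sqrt{bn}\big)^2$ and choosing $D_1,D_2,D_3$ large enough: expanding the square reproduces $C_1 t^2$ exactly, the $tn$ and $tb$ cross terms are absorbed by taking $2\sqrt{C_1}D_1$ and $2\sqrt{C_1}D_2$ at least as large as their counterparts, the pure $n^2,b^2$ terms by $D_1^2,D_2^2$, and the $nb$ term by $D_3^2+2D_1D_2$, while the leftover $t\sqrt{bn}$-type products generated by the square are nonnegative and only help. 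Taking square roots then gives \eqref{eq:74}, and setting $\epsilon=0$ kills the $b$- and $\sqrt{bn}$-terms. I expect the main obstacle to be organizational rather than conceptual: disposing of the $tn$ and $tb$ cross terms without inflating the leading coefficient beyond $\sqrt{C_1}$, which is exactly what the perfect-square domination accomplishes, whereas a naive $\sqrt{\sum(\cdot)}\le\sum\sqrt{\cdot}$ split would spuriously introduce $\sqrt{tn}$ and $\sqrt{tb}$ summands that do not appear in the claimed bound.
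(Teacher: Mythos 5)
Your proof is correct, and structurally it is the paper's proof: the same three-term expansion of $\vectornorm{\gamma_i-\bestsignal}_2^2$, the same double use of Lemma~\ref{lemma:comb_selection} (squared, and after subadditivity of the square root), the same reuse of the intermediate estimate \eqref{eq:13} and of $\vectornorm{v_i}_2\le\vectornorm{v_i-\bestsignal}_2+\vectornorm{\bestsignal}_2$, ending in the same quadratic form in $t,n,b$ (the paper's \eqref{eq:72}). The one substantive deviation is the cross term. The paper does not use plain Cauchy--Schwarz there: it re-runs the optimality-condition/RIP argument of Lemma~\ref{lemma:greedy} with $(v_i-\gamma_i)_{\mathcal{S}_i}$ as the test direction (legitimate because $\gamma_i$, being a thresholded version of $v_i$, is feasible for the step-2 constraint set), obtaining $\abs{\langle v_i-\bestsignal,(v_i-\gamma_i)_{\mathcal{S}_i}\rangle}\le\delta_{3\sparsity}\vectornorm{v_i-\bestsignal}_2\vectornorm{v_i-\gamma_i}_2+\sqrt{1+\delta_{2\sparsity}}\vectornorm{v_i-\gamma_i}_2\vectornorm{\noise}_2$, i.e.\ its inequality \eqref{eq:24}. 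Your alternative---Cauchy--Schwarz against the support-restricted residual, giving $\vectornorm{\gamma_i-v_i}_2\vectornorm{(v_i-\bestsignal)_{\mathcal{S}_i}}_2$, then \eqref{eq:13}---is equivalent: \eqref{eq:24} is exactly $\vectornorm{v_i-\gamma_i}_2$ times the right-hand side of \eqref{eq:13}, so after substitution the two chains produce identical coefficients, and your version even spares the second optimality argument. You also correctly isolated the one place where care is genuinely needed: pairing Cauchy--Schwarz with the full $v_i-\bestsignal$ instead of its restriction to $\mathcal{S}_i$ would put $\delta_{3\sparsity}$ rather than $\delta_{3\sparsity}^2$ into the leading coefficient and the claimed constant would be unreachable. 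The remaining difference is bookkeeping: the paper completes squares and performs ``tedious computations'' to extract the explicit constants \eqref{eq:102}--\eqref{eq:104}, whereas your perfect-square domination yields cruder, unspecified constants; since the lemma asserts only that $D_1,D_2,D_3$ depend on $\epsilon,\delta_{2\sparsity},\delta_{3\sparsity}$, and your argument preserves the leading coefficient exactly, that is sufficient.
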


\begin{proof} We observe the following
\begin{align}
\vectornorm{\gamma_i  - \bestsignal}_2^2 &= \vectornorm{\gamma_i  - v_i + v_i - \bestsignal}_2^2 \label{eq:16} \\
									   &= \vectornorm{(v_i - \bestsignal) - (v_i - \gamma_i )}_2^2 \label{eq:17} \\
									   &= \vectornorm{v_i - \bestsignal}_2^2 + \vectornorm{v_i - \gamma_i }_2^2 - 2\langle v_i - \bestsignal, v_i - \gamma_i  \rangle. \label{eq:18}
\end{align}
Focusing on the right hand side of expression (\ref{eq:18}), $ \langle v_i - \bestsignal, v_i - \gamma_i  \rangle = \langle v_i - \bestsignal, (v_i - \gamma_i )_{\mathcal{S}_i} \rangle $ can be similarly analysed as (\ref{eq:06})-(\ref{eq:12}) where we obtain the following expression:
\begin{align}
|\langle v_i - \bestsignal, (v_i - \gamma_i )_{\mathcal{S}_i} \rangle | \leq \delta_{3\sparsity} \vectornorm{v_i - \bestsignal}_2 \vectornorm{v_i - \gamma_i }_2 + \sqrt{1+\delta_{2\sparsity}} \vectornorm{v_i - \gamma_i }_2 \vectornorm{\noise}_2. \label{eq:24}
\end{align}

Now, expression (\ref{eq:18}) can be further transformed as:
\begin{align}
\vectornorm{\gamma_i  - \bestsignal}_2^2 &= \vectornorm{v_i - \bestsignal}_2^2 + \vectornorm{v_i - \gamma_i }_2^2 - 2\langle v_i - \bestsignal, v_i - \gamma_i  \rangle \label{eq:25} \\	
										&\leq \vectornorm{v_i - \bestsignal}_2^2 + \vectornorm{v_i - \gamma_i }_2^2 + 2|\langle v_i - \bestsignal, v_i - \gamma_i  \rangle |\label{eq:25a} \\	
										&\stackrel{(i)}{\leq} \vectornorm{v_i - \bestsignal}_2^2 + \vectornorm{v_i - \gamma_i }_2^2 +  2(\delta_{3\sparsity} \vectornorm{v_i - \bestsignal}_2 \vectornorm{v_i - \gamma_i }_2 + \sqrt{1+\delta_{2\sparsity}} \vectornorm{v_i - \gamma_i }_2 \vectornorm{\noise}_2) \label{eq:25b} \\
                                       &\stackrel{(ii)}{\leq} \vectornorm{v_i - \bestsignal}_2^2 + (1 - \epsilon) \vectornorm{\gamma_i^{\text{opt}} - v_i}_2^2 + \epsilon \vectornorm{v_i}_2^2 \nonumber \\ &+ 2\Big(\delta_{3\sparsity} \vectornorm{v_i - \bestsignal}_2 \sqrt{(1 - \epsilon) \vectornorm{\gamma_i^{\text{opt}} - v_i}_2^2 + \epsilon \vectornorm{v_i}_2^2} \nonumber \\ &+ \sqrt{1+\delta_{2\sparsity}} \sqrt{(1 - \epsilon) \vectornorm{\gamma_i^{\text{opt}} - v_i}_2^2 + \epsilon \vectornorm{v_i}_2^2} \vectornorm{\noise}_2\Big),\label{eq:68}
\end{align} where $ (i) $ is due to (\ref{eq:24}) and $ (ii) $ is due to Lemma \ref{lemma:comb_selection}. Given that $ \sqrt{a^2 + b^2} \leq a + b $ for $ a, b \geq 0 $, we further have:
\begin{align}
\vectornorm{\gamma_i  - \bestsignal}_2^2 &\leq \vectornorm{v_i - \bestsignal}_2^2 + (1 - \epsilon) \vectornorm{\gamma_i^{\text{opt}} - v_i}_2^2 + \epsilon \vectornorm{v_i}_2^2 + 2\delta_{3\sparsity} \vectornorm{v_i - \bestsignal}_2 \big(\sqrt{1 - \epsilon} \vectornorm{\gamma_i^{\text{opt}} - v_i}_2 + \sqrt{\epsilon} \vectornorm{v_i}_2\big) \nonumber \\ &+ 2\sqrt{1+\delta_{2\sparsity}} \big(\sqrt{1 - \epsilon} \vectornorm{\gamma_i^{\text{opt}} - v_i}_2 + \sqrt{\epsilon} \vectornorm{v_i}_2\big) \vectornorm{\noise}_2\label{eq:69} \\															   &\stackrel{(i)}{\leq} \vectornorm{v_i - \bestsignal}_2^2 + (1 - \epsilon) \vectornorm{(v_i - \bestsignal)_{\mathcal{S}_i}}_2^2 + \epsilon \vectornorm{v_i}_2^2 \nonumber \\ &+ 2\delta_{3\sparsity} \vectornorm{v_i - \bestsignal}_2 \big(\sqrt{1 - \epsilon} \vectornorm{(v_i - \bestsignal)_{\mathcal{S}_i}}_2 + \sqrt{\epsilon} \vectornorm{v_i}_2\big) \nonumber \\ &+ 2\sqrt{1+\delta_{2\sparsity}} \big(\sqrt{1 - \epsilon} \vectornorm{(v_i - \bestsignal)_{\mathcal{S}_i}}_2 + \sqrt{\epsilon} \vectornorm{v_i}_2\big) \vectornorm{\noise}_2\label{eq:70} \\
															   &\stackrel{(ii)}{\leq} \vectornorm{v_i - \bestsignal}_2^2 + (1 - \epsilon) (\delta_{3\sparsity} \vectornorm{v_i - \bestsignal}_2 + \sqrt{1+\delta_{2\sparsity}} \vectornorm{\noise}_2)^2 + \epsilon \vectornorm{v_i}_2^2 \nonumber \\ &+ 2\delta_{3\sparsity} \vectornorm{v_i - \bestsignal}_2 \big(\sqrt{1 - \epsilon} (\delta_{3\sparsity} \vectornorm{v_i - \bestsignal}_2 + \sqrt{1+\delta_{2\sparsity}} \vectornorm{\noise}_2) + \sqrt{\epsilon} \vectornorm{v_i}_2\big) \nonumber \\ &+ 2\sqrt{1+\delta_{2\sparsity}} \big(\sqrt{1 - \epsilon} (\delta_{3\sparsity} \vectornorm{v_i - \bestsignal}_2 + \sqrt{1+\delta_{2\sparsity}} \vectornorm{\noise}_2) + \sqrt{\epsilon} \vectornorm{v_i}_2\big) \vectornorm{\noise}_2,\label{eq:71}
\end{align}  where $ (i) $ is due to (\ref{eq:19}) and $ (ii) $ is due to (\ref{eq:13}).
Applying basic algebra on the right hand side of (\ref{eq:71}), we get:
\begin{align}
\vectornorm{\gamma_i  - \bestsignal}_2^2 &=\big(1 + (1-\epsilon)\delta_{3\sparsity}^2 + 2\delta_{3\sparsity}^2\sqrt{1-\epsilon}\big) \vectornorm{v_i - \bestsignal}_2^2 \nonumber \\ &+ \big(2(1-\epsilon)\delta_{3\sparsity}\sqrt{1+\delta_{2\sparsity}} + 4\delta_{3\sparsity}\sqrt{1-\epsilon}\sqrt{1+\delta_{2\sparsity}}\big)\vectornorm{v_i - \bestsignal}_2\vectornorm{\noise}_2 \nonumber \\ &+ \big((1-\epsilon)(1+\delta_{2\sparsity}) + 2(1+\delta_{2\sparsity})\sqrt{1-\epsilon}\big)\vectornorm{\noise}_2^2 \nonumber \\ &+ 2\delta_{3\sparsity}\sqrt{\epsilon}\vectornorm{v_i - \bestsignal}_2\vectornorm{v_i}_2 + 2\sqrt{\epsilon(1+\delta_{2\sparsity})}\vectornorm{v_i}_2\vectornorm{\noise}_2 + \epsilon \vectornorm{v_i}_2^2 \label{eq:72} \\
															   &\stackrel{(i)}{\leq} \Big(1 + \big((1-\epsilon) + 2\sqrt{1-\epsilon}\big)\delta_{3\sparsity}^2\Big)\Bigg(\vectornorm{v_i - \bestsignal}_2 + \sqrt{\frac{\big((1-\epsilon) + 2\sqrt{1-\epsilon}\big)(1+\delta_{2\sparsity})}{1 + \big((1-\epsilon) + 2\sqrt{1-\epsilon}\big)\delta_{3\sparsity}^2}}\vectornorm{\noise}\Bigg)^2 \nonumber \\
															   &+ 2\delta_{3\sparsity}\sqrt{\epsilon}\vectornorm{v_i - \bestsignal}_2\vectornorm{v_i}_2 + 2\sqrt{\epsilon(1+\delta_{2\sparsity})}\vectornorm{v_i}_2\vectornorm{\noise}_2 + \epsilon \vectornorm{v_i}_2^2. \label{eq:73}
\end{align} where $ (i) $ is obtained by completing the squares and eliminating negative terms in (\ref{eq:72}).

Using triangle inequality, we know that:
\begin{align}
\vectornorm{v_i}_2 \leq \vectornorm{v_i - \bestsignal}_2 + \vectornorm{\bestsignal}_2, \label{eq:100}
\end{align} and, thus, (\ref{eq:73}) can be further analyzed as:
\begin{align}
\vectornorm{\gamma_i - \bestsignal}_2^2 &\leq \Big(1 + \big((1-\epsilon) + 2\sqrt{1-\epsilon}\big)\delta_{3\sparsity}^2\Big)\Bigg(\vectornorm{v_i - \bestsignal}_2 + \sqrt{\frac{\big((1-\epsilon) + 2\sqrt{1-\epsilon}\big)(1+\delta_{2\sparsity})}{1 + \big((1-\epsilon) + 2\sqrt{1-\epsilon}\big)\delta_{3\sparsity}^2}}\vectornorm{\noise}\Bigg)^2 \nonumber \\
															   &+ (2\delta_{3\sparsity}\sqrt{\epsilon} + \epsilon)\vectornorm{v_i - \bestsignal}_2^2 + (2\delta_{3\sparsity}\sqrt{\epsilon}\vectornorm{\bestsignal}_2 + 2\sqrt{\epsilon(1+\delta_{2\sparsity})}\vectornorm{\noise}_2 + 2\epsilon \vectornorm{\bestsignal}_2) \vectornorm{v_i - \bestsignal}_2 \nonumber \\
															   &+ 2\sqrt{\epsilon(1+\delta_{2\sparsity})}\vectornorm{\bestsignal}_2\vectornorm{\noise}_2 + \epsilon \vectornorm{\bestsignal}_2^2. \label{eq:101}
\end{align}

After tedious computations, we end up with the following inequality:
\begin{align}
\vectornorm{\gamma_i  - \bestsignal}_2 &\leq \sqrt{1 + \big((1-\epsilon) + 2\sqrt{1-\epsilon}\big)\delta_{3\sparsity}^2 + 2\delta_{3\sparsity}\sqrt{\epsilon} + \epsilon}\cdot \vectornorm{v_i - \bestsignal}_2 \nonumber \\ &+ D_1\vectornorm{\noise}_2 + D_2 \vectornorm{\bestsignal}_2 + D_3 \sqrt{\vectornorm{\bestsignal}_2 \vectornorm{\noise}_2}, \label{eq:74a}
\end{align} where
\begin{align}
D_1 &\triangleq \frac{\sqrt{1 + \big((1-\epsilon) + 2\sqrt{1-\epsilon}\big)\delta_{3\sparsity}^2}\sqrt{\big((1-\epsilon) + 2\sqrt{1-\epsilon}\big)(1+\delta_{2\sparsity})} + \sqrt{\epsilon(1+\delta_{2\sparsity})}}{\sqrt{1 + \big((1-\epsilon) + 2\sqrt{1-\epsilon}\big)\delta_{3\sparsity}^2 + 2\delta_{3\sparsity}\sqrt{\epsilon} + \epsilon}}, \label{eq:102} \\
D_2 &\triangleq \frac{\delta_{3\sparsity}\sqrt{\epsilon} + \epsilon}{\sqrt{1 + \big((1-\epsilon) + 2\sqrt{1-\epsilon}\big)\delta_{3\sparsity}^2 + 2\delta_{3\sparsity}\sqrt{\epsilon} + \epsilon}} + \sqrt{\epsilon - \frac{(\epsilon + \delta_{3\sparsity}\sqrt{\epsilon})^2}{1 + \big((1-\epsilon) + 2\sqrt{1-\epsilon}\big)\delta_{3\sparsity}^2 + 2\delta_{3\sparsity}\sqrt{\epsilon} + \epsilon}}, \label{eq:103} \\
D_3 &\triangleq \sqrt{2\sqrt{\epsilon(1+\delta_{2\sparsity})}}. \label{eq:104}
\end{align} 
\end{proof}

Using the above lemmas, we now complete the proof of Theorem 1. 
\begin{proof} Combining (\ref{eq:15}) with (\ref{eq:74}), we get:
\begin{align}
\vectornorm{\gamma_i  - \bestsignal}_2 &\leq \sqrt{\frac{1 + \big((1-\epsilon) + 2\sqrt{1-\epsilon}\big)\delta_{3\sparsity}^2 + 2\delta_{3\sparsity}\sqrt{\epsilon} + \epsilon}{1-\delta_{3\sparsity}^2}}\cdot \vectornorm{(v_i - \bestsignal)_{\mathcal{S}_i^c}}_2 \nonumber \\ &+ D_4\vectornorm{\noise}_2 + D_2 \vectornorm{\bestsignal}_2 + D_3 \sqrt{\vectornorm{\bestsignal}_2 \vectornorm{\noise}_2}, \label{eq:75}
\end{align} where 
\begin{align}
D_4 \triangleq D_1 + \frac{\sqrt{1+\delta_{2\sparsity}}}{1-\delta_{3\sparsity}}\sqrt{1 + \big((1-\epsilon) + 2\sqrt{1-\epsilon}\big)\delta_{3\sparsity}^2 + 2\delta_{3\sparsity}\sqrt{\epsilon} + \epsilon}.  \label{eq:105}
\end{align}

We know that  $ \mathcal{X}_i \subseteq \mathcal{S}_i $. Thus, $ (v_i)_{\mathcal{S}_i^c} = 0 $ iff $ (\signal_i)_{\mathcal{S}_i^c} = 0 $. Therefore, 
\begin{align}
\vectornorm{(v_i - \bestsignal)_{\mathcal{S}_i^c}}_2 = \vectornorm{(v_i)_{\mathcal{S}_i^c} - (\bestsignal)_{\mathcal{S}_i^c}}_2 =  \vectornorm{(\signal_i)_{\mathcal{S}_i^c} - (\bestsignal)_{\mathcal{S}_i^c}}_2 = \vectornorm{(\signal_i - \bestsignal)_{\mathcal{S}_i^c}}_2. \label{eq:29}
\end{align}

Now, using (\ref{eq:99a}), we form the following recursion:
\begin{align}
\vectornorm{\gamma_i - \bestsignal}_2 &\leq \sqrt{\frac{1 + \big((1-\epsilon) + 2\sqrt{1-\epsilon}\big)\delta_{3\sparsity}^2 + 2\delta_{3\sparsity}\sqrt{\epsilon} + \epsilon}{1-\delta_{3\sparsity}^2}}(\delta_{3\sparsity} + \delta_{2\sparsity} + \sqrt{\epsilon}(1+\delta_{2\sparsity})) \vectornorm{\signal_i - \bestsignal}_2 \nonumber \\ &+ D_5\vectornorm{\noise}_2 + D_2 \vectornorm{\bestsignal}_2 + D_3 \sqrt{\vectornorm{\bestsignal}_2 \vectornorm{\noise}_2}, \label{eq:32}
\end{align} where 
\begin{align}
D_5 &\triangleq \sqrt{\frac{1 + \big((1-\epsilon) + 2\sqrt{1-\epsilon}\big)\delta_{3\sparsity}^2 + 2\delta_{3\sparsity}\sqrt{\epsilon} + \epsilon}{1-\delta_{3\sparsity}^2}}\big(\sqrt{2(1+\delta_{3\sparsity})} + \sqrt{\epsilon(1+\delta_{2\sparsity}})\big) + D_4. \label{eq:76}
\end{align}

Finally, substituting (\ref{eq:32}) in (\ref{eq:58}), we compute the desired recursive formula:
\begin{align}
\frac{\vectornorm{\signal_{i+1} - \bestsignal}_2}{\vectornorm{\bestsignal}_2} &\leq \rho \frac{\vectornorm{\signal_i - \bestsignal}_2}{\vectornorm{\bestsignal}_2} + \frac{c_1(\delta_{2\sparsity}, \delta_{3\sparsity}, \epsilon)}{SNR} + c_2(\delta_{2\sparsity}, \delta_{3\sparsity}, \epsilon) + c_3(\delta_{2\sparsity}, \delta_{3\sparsity}, \epsilon)\sqrt{\frac{1}{SNR}}, \label{eq:59a}
\end{align} where $ SNR = \frac{\vectornorm{\bestsignal}_2}{\vectornorm{\noise}_2} = \frac{\vectornorm{\bestsignal}_2}{\sqrt{f(\bestsignal)}} $ and
\begin{align}
\rho &\triangleq \frac{\delta_{3\sparsity} + \delta_{2\sparsity} + \sqrt{\epsilon}(1+\delta_{2\sparsity})}{\sqrt{1-\delta_{2\sparsity}^2}} \sqrt{\frac{1+\big((1-\epsilon) + 2\sqrt{1-\epsilon}\big)\delta_{3\sparsity}^2 + 2\delta_{3\sparsity}\sqrt{\epsilon} + \epsilon}{1-\delta_{3\sparsity}^2}}, \label{eq:79} \\
c_1(\delta_{2\sparsity}, \delta_{3\sparsity}, \epsilon) &\triangleq \frac{D_5}{\sqrt{1-\delta_{2\sparsity}^2}} + \frac{\sqrt{1+\delta_\sparsity}}{1-\delta_{2\sparsity}}, \label{eq:80} \\
c_2(\delta_{2\sparsity}, \delta_{3\sparsity}, \epsilon) &\triangleq \frac{1}{\sqrt{1-\delta_{2\sparsity}^2}}\Bigg(
\frac{\delta_{3\sparsity}\sqrt{\epsilon} + \epsilon}{\sqrt{1 + \big((1-\epsilon) + 2\sqrt{1-\epsilon}\big)\delta_{3\sparsity}^2 + 2\delta_{3\sparsity}\sqrt{\epsilon} + \epsilon}} \nonumber \\ &+ \sqrt{\epsilon - \frac{(\epsilon + \delta_{3\sparsity}\sqrt{\epsilon})^2}{1 + \big((1-\epsilon) + 2\sqrt{1-\epsilon}\big)\delta_{3\sparsity}^2 + 2\delta_{3\sparsity}\sqrt{\epsilon} + \epsilon}}\Bigg), \\
c_3(\delta_{2\sparsity}, \delta_{3\sparsity}, \epsilon) &\triangleq \frac{D_3}{\sqrt{1-\delta_{2\sparsity}^2}}.
\end{align}
\end{proof}

Some of the techniques used in the proof of Theorem 1 borrow from Foucart's paper \cite{foucart2010sparse}.

\bibliographystyle{unsrt}
\bibliography{recipes}

\end{document}